\setlist{noitemsep}  
\newtheorem{theorem}{Theorem}
\theoremstyle{remark}
\theoremstyle{plain}
\newcolumntype{C}[1]{>{\centering\arraybackslash}p{#1}}
\DeclareMathOperator{\Tr}{Tr}
\title{\vspace{-50pt} \textbf{Loss-based prior for the degrees of freedom of the Wishart distribution}
}
\author{Luca Rossini\thanks{University of Milan, Italy and Fondazione Eni Enrico Mattei. \color{blue}\texttt{luca.rossini@unimi.it}} \and Cristiano Villa\thanks{Duke-Kunshan University, Suzhou, China. \color{blue}}
\and Sotiris Prevenas\thanks{School of Mathematics, Statistics and Actuarial Science, University of Kent, United Kingdom. \color{blue}}
\and Rachel McCrea\thanks{Department of Mathematics and Statistics, Lancaster University, United Kingdom. \color{blue}}}
\begin{document}

\maketitle

\begin{abstract}
\noindent
{Motivated by the proliferation of extensive macroeconomic and health datasets necessitating accurate forecasts, a novel approach is introduced to address Vector Autoregressive (VAR) models. This approach employs the global-local shrinkage-Wishart prior.  Unlike conventional VAR models, where degrees of freedom are predetermined to be equivalent to the size of the variable plus one or equal to zero, the proposed method integrates a hyperprior for the degrees of freedom to account for the uncertainty about the parameter values.  Specifically, a loss-based prior is derived to leverage information regarding the data-inherent degrees of freedom. The efficacy of the proposed prior is demonstrated in a multivariate setting for forecasting macroeconomic data, as well as Dengue infection data.} \\


\noindent \textbf{Keyword}: Forecasting, Global-local Shrinkage Prior, Loss-based prior, Macroeconomic data, Vector Autoregressive models
\end{abstract}

\section{Introduction}

{Macroeconomic and health forecasting has always been of pivotal importance for policymakers and researchers. In the twenty-first century, the availability of large data sets and novel modeling techniques have improved the forecasting of macroeconomic and financial variables \citep[see][]{Stock2002, DeMol2008, Banbura2010, Koop2013, Huber2019}. It is beneficial to extract relevant information about the processes of interest from the data, and this paper demonstrates how to leverage the information that is intrinsic to variables of interest, to provide an adaptable representation of prior uncertainty. 

A core model used by researchers and policymakers is the Vector Autoregressive (VAR) model, introduced in \cite{Sims1980}. The VAR model is ideal for representing the relationships between quantities of interest given its flexibility to accommodate multiple time series. Within the Bayesian framework, substantial research has been carried out to avoid over parametrization and overfitting \citep[e.g.][]{Doan1984,Litterman1986,Sims1998}, mainly through the definition of suitable shrinkage prior distributions, or by employing appropriate hierarchical structures \citep[e.g.][]{Banbura2010, Carriero2015, Carriero2019, Huber2019}. 

In this paper, we focus on the global-local shrinkage prior for the matrix of coefficients, and we rely on the Horseshoe prior distribution \citep{Carvalho2010}, which was introduced for VAR models by \cite{Follett2019} and recently, was used on the reduced form of the VAR matrix of coefficients by \cite{Bernardi2024} and \cite{Gruber2023}. Conventional approaches fix the degrees of freedom of the Wishart distribution to be equal to the size of the variable plus one \citep{Koop2010} or equal to zero \citep{Uhlig2005}. As stated in \cite{Koop2010}, this choice balances being informative and allowing uncertainty in the estimation of the covariance matrix, thus it is a trade-off between avoiding overfitting and allowing for flexibility. As far as we are aware, no investigation has been conducted regarding the validity of this assumption, and in this paper, we address this issue by assuming a hyperprior for the degrees of freedom based on the loss-based priors introduced in \cite{Villa2015}. We propose this solution to account for the uncertainty of the parameter values and derive a prior that exploits the information about the number of degrees of freedom contained in the data.

The loss-based hyperprior has its foundation in the measure of the loss that a researcher would incur if the wrong model was chosen, and it depends solely on the probability distribution used to model the data. In particular, it allows the researcher to not set a value for the degrees of freedom of the Wishart distribution and thus it runs the risk of impacting the posterior distribution and the model forecasts. Our proposed hyperprior is tested through different simulation scenarios and compared with alternative approaches, such as the flat prior proposed by \cite{Uhlig2005}.

We study the merits of our approach by considering the Federal Reserve Economic Data (FRED) dataset by \cite{McCracken2016} and \cite{McCracken2020}, and by performing both point and density forecasting measures on different macroeconomic variables. The proposed prior has two important advantages compared to fixing the degrees of freedom to a fixed value. Firstly, in the absence of any prior information about the true variance-covariance matrix, it is possible to integrate the process with suitable information contained in the data, making the approach more robust. Secondly, there is the possibility of evaluating the degrees of freedom across time via a rolling window estimation procedure. Indeed, we find strong evidence of changes in the degrees of freedom after the year 2000 and a fall around the 2009 financial crisis, which is strongly related to the Lehman Brothers failure. 

As a second case study, we analyse the Google Dengue Trends (GDT) dataset for ten different countries, which is a query-based reporting system for infectious disease \citep{Carneiro2009,Strauss2017}. Dengue is a viral infection transmitted by mosquitoes and is present in South American and Asian countries. 
The results from the second case study demonstrate the flexibility of the proposed approach.

The paper is structured as follows. Section \ref{sec_prel} describes the VAR model and derives the proposed loss-based prior for the degrees of freedom of the Wishart distribution. In Section \ref{sec_simu}, using simulation, we compare the proposed hyperprior to the model that assumes fixed degrees of freedom. Section \ref{sec_real} deals with the forecasting of macroeconomic variables, and Section \ref{sec_dengue} shows additional results on the forecasting of the Google Dengue Trends data. Section \ref{sec_concl} concludes the paper.
}

\section{VAR model for forecasting and the novel loss-based prior} \label{sec_prel}

{In this section, we define the VAR model, present the prior assumptions, and derive the novel loss-based hyperprior for the degrees of freedom of the Wishart distribution and its properness property.}

Let $\bm{y}_t$ be the $m$-dimensional vector of observations, for $t=1,\ldots,T$. {We define a VAR model with $p$ lags as}
\begin{equation}
\bm{y}_t = \sum_{j=1}^p A_j \bm{y}_{t-j} + \bm{\varepsilon}_t,\label{VAR}
\end{equation}
where $A_j$ is a $(m\times m)$ matrix of coefficients, and $\bm{\varepsilon}_t$  {is an $m$-dimensional vector of independent and identically normally distributed error terms centered on $0$, and with covariance matrix $\Sigma$, $\bm{\varepsilon}_t \sim \mathcal{N}(0,\Sigma)$.}

{In a more compact form, Eq.~\eqref{VAR} can be written as}
\begin{equation*}
    Y=XA+E,
\end{equation*}
where Y is a $(T\times m)$ matrix constructed as $Y=(\bm{y}_1,\bm{y}_2,\dots,\bm{y}_T)'$, and $X=(\bm{x}_1,\bm{x}_2,\dots,\bm{x}_T)'$ is a $(T\times k)$ matrix containing the lagged response variables, where $\bm{x}_t=(\bm{y}'_{t-1},\bm{y}'_{t-2},\dots,\bm{y}'_{t-p})$. {Moreover, $A=(A_1,A_2,\dots,A_p)$ is a $(k\times m)$ matrix of coefficients and $E=(\bm{\varepsilon}_1,\dots,\bm{\varepsilon}_T)'$ is a $(T\times m)$ matrix of errors. In a vectorized form, we can define the VAR models of order $p$ as}
\begin{equation*}
\bm{y} = \left(I_m \otimes X\right) \bm{\alpha} + \bm{\varepsilon},
\end{equation*}
where $\bm{y} = \text{vec}(Y)$, $\bm{\alpha} = \text{vec}(A)$ and $\bm{\varepsilon} = \text{vec}(E)$ with distribution $\bm{\varepsilon} \sim \mathcal{N}(0,\Sigma \otimes I_T)$ and $\otimes$ is the Kronecker product.

In this paper, we adopt a {global-local shrinkage}-Wishart prior for the parameters of the model, where we assume a Horseshoe prior distribution \citep{Carvalho2010} for the matrix of coefficients 
and a Wishart distribution for the precision matrix $\Sigma^{-1} \sim \mathcal{W}(\underline{\nu},\underline{S}^{-1})$. We set the hyperparameters for the Wishart prior equal to $\underline{\nu} = m+1$ and $\underline{S} = I_m$. As a robustness check, we ran a Normal-Wishart prior with hyperparameters for the Normal prior equal to $\underline{\bm{\alpha}} = \bm{0}$ and $\underline{V} = 10\cdot I_{mk}$ (the results are available upon request).

In particular, the general Horseshoe prior for each element of the vectorized matrix of coefficient $\bm{\alpha}$, takes the form
{
\begin{align*}
     \alpha_j| (\lambda_j^{\alpha})^2, (\tau^{\alpha})^2 &\sim \mathcal{N}(0, (\lambda_j^{\alpha})^2 (\tau^{\alpha})^2),\\
    \lambda_j^{\alpha} &\sim \mathcal{C}^{+}(0,1), \\
    \tau^{\alpha} &\sim \mathcal{C}^{+}(0,1),     
\end{align*}
}where $\mathcal{C}^{+}(\cdot, \cdot)$ denotes the half-Cauchy distribution, $\lambda_j^{\alpha}$ is the local shrinkage parameter, $\tau^\alpha$ is the global shrinkage parameter, and $j=1,\ldots, k\cdot m$. For the posterior distribution of $\alpha$, and of the global and local shrinkage parameters $\lambda_j^\alpha$ and $\tau^\alpha$, refer to \cite{Cross2020} and the algorithm proposed by \cite{Makalic2016}.


\subsection{Loss-based hyperprior} \label{sec_loss}
{In the literature, the usual assumption on the degrees of freedom of the Wishart distribution is to set the parameter to $\nu = m +1$. In this section, we derive} the loss-based prior distribution for $\nu$. Thus, the parameter is assumed discrete, and, to construct the prior, we employ the objective method introduced in \cite{Villa2015}.

Let us consider a Bayesian model with sampling distribution $f(x|\theta)$, characterized by the discrete parameter $\theta \in \Theta$, and prior $\pi(\theta)$. A mass is assigned to each value of the parameter that is proportional to the Kullback--Leibler divergence between the model defined by $\theta$ and the nearest one. In other words, if $f(x|\theta)$ is the true model, and it is not chosen, then the loss in information that one would incur is represented by the Kullback--Leibler divergence between $f(x|\theta)$ and $f(x|\theta^\prime)$, where the latter is the nearest model to the true one.
Therefore, the prior on $\theta$ is
\begin{equation} \label{objpriorgeneral}
\pi(\theta)  \propto \exp \left\{ \min_{\theta^\prime\neq \theta \in \Theta} KL\left(f(x|\theta)\|f(x|\theta^\prime)\right)  \right\} -1,
\end{equation}
where $KL\left(f(x|\theta)\|f(x|\theta^\prime)\right)$ represents the Kullback--Leibler divergence between the two models. {A more detailed derivation of the prior in Eq.~\eqref{objpriorgeneral} is illustrated in Appendix \ref{LBprior}.}

To derive the prior in Eq.~\eqref{objpriorgeneral} for $\nu$, the Kullback--Leibler divergence between two Wishart distributions that share the same scale matrix $V$ and differ in the number of degrees of freedom, say $W_\nu$ and $W_{\nu+c}$ is required. The probability density function of a Wishart distribution with parameters $V$ and $\nu$ is given by: 
\begin{align*}
W(X|V,\nu)=\frac{|X|^{(\nu-m-1)/2}\exp(-\Tr(V^{-1}X)/2)}{2^{\frac{\nu m}{2}}|V|^{\nu/2}\Gamma_m(\nu/2)}, 
\end{align*}
where $\Gamma_m(\cdot)$ is the multivariate Gamma function, and $\Tr(\cdot)$ is the trace function.
Thus the Kullback--Leibler divergence between two Wishart distributions is given by:
\begin{align*}
    KL(W_\nu\|W_{\nu+c}) = \log\left\{\frac{\Gamma_m\left(\frac{\nu+c}{2}\right)}{\Gamma_m\left(\frac{\nu}{2}\right)}\right\} - \frac{c}{2}\psi_m\left(\frac{\nu}{2}\right),
\end{align*}
where $\psi_m(\cdot)$ is the multivariate digamma function defined as $\psi_m(x)=\sum_{i=1}^{m}\psi(x+(1-i)/2)$, $\psi(x)=\Gamma'(x)/\Gamma(x)$ is the digamma function, and $c \in \mathbb{Z}$. As $ KL(W_\nu\|W_{\nu+c})$ is a convex function of $c$, and its global minimum is at $c=0$, the nearest Wishart distribution to $W_\nu$ will be $W_{\nu+c}$ for either $c=-1$ or $c=1$. Theorem~\ref{teo1} shows that the Kullback--Leibler divergence between $W_\nu$ and $W_{\nu+c}$ is minimized for $c=1$.
\begin{theorem}\label{teo1}
Consider two Wishart distributions, $W_\nu$ and $W_{\nu+c}$, with the same scale matrix and $\nu$ and $\nu+c$ degrees of freedom, respectively, and $c\neq0$ is an integer. Then, the Kullback--Leibler divergence between $W_\nu$ and $W_{\nu+c}$  is minimum for $c=1$.
\end{theorem}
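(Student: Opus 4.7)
The statement immediately preceding the theorem already establishes that $KL(W_\nu \| W_{\nu+c})$, viewed as a function of $c \in \mathbb{R}$, is strictly convex with global minimum at $c = 0$; in fact this can be re-derived quickly by noting that the second derivative equals $\tfrac14 \psi'_m((\nu+c)/2) > 0$ since $\psi'(x) = \sum_{k\ge 0} 1/(k+x)^2 > 0$. Consequently, its minimum over $c \in \mathbb{Z} \setminus \{0\}$ is attained at either $c = 1$ or $c = -1$, and the only thing left to verify is the strict inequality
$$KL(W_\nu \| W_{\nu+1}) \;<\; KL(W_\nu \| W_{\nu-1}).$$
My plan is to subtract the two sides, collapse the difference into a sum of scalar $\log$/$\psi$ expressions using the product structure of $\Gamma_m$ and $\psi_m$, and then dispatch each summand by a short concavity argument.

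For the algebraic reduction, I would apply $\Gamma_m(x) = \prod_{i=1}^{m} \Gamma(x + (1-i)/2)$ together with the recurrence $\Gamma(z+1) = z\Gamma(z)$ to each factor in the ratio $\Gamma_m((\nu+1)/2)/\Gamma_m((\nu-1)/2)$. This telescopes to $\prod_{i=1}^{m}(\nu-i)/2$, giving
$$\log\frac{\Gamma_m((\nu+1)/2)}{\Gamma_m((\nu-1)/2)} = \sum_{i=1}^{m} \log\!\left(\frac{\nu-i}{2}\right),\qquad \psi_m(\nu/2) = \sum_{i=1}^{m}\psi\!\left(\frac{\nu+1-i}{2}\right).$$
Subtracting $KL(W_\nu\|W_{\nu-1})$ from $KL(W_\nu\|W_{\nu+1})$ and cancelling the $\psi_m(\nu/2)$ terms, the target inequality reduces to the pointwise claim $\log y < \psi(y + 1/2)$, applied at $y = (\nu-i)/2$ for $i = 1, \ldots, m$ (positive under the standard Wishart condition $\nu > m$).

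To prove $\log y < \psi(y+1/2)$ for $y > 0$, I would combine two classical facts. First, since $\Gamma(y+1) = y\Gamma(y)$ and $\psi = (\log\Gamma)'$, the fundamental theorem of calculus gives the exact identity $\log y = \int_y^{y+1} \psi(t)\, dt$. Second, $\psi$ is strictly concave on $(0,\infty)$ because $\psi''(y) = -\sum_{k\ge 0} 2/(k+y)^3 < 0$. The midpoint inequality for strictly concave functions (the tangent line at the midpoint lies strictly above the function away from that point) then yields $\int_y^{y+1}\psi(t)\,dt < \psi(y+1/2)$. Summing over $i$ and reinserting recovers the strict inequality $KL(W_\nu\|W_{\nu+1}) < KL(W_\nu\|W_{\nu-1})$, which together with convexity identifies $c = 1$ as the unique integer minimizer.

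The main obstacle I anticipate is spotting the correct one-variable inequality $\log y < \psi(y+1/2)$ and supplying a clean justification; once the identity $\log y = \int_y^{y+1}\psi(t)\,dt$ is recognised, strict concavity of $\psi$ does the rest, avoiding any appeal to asymptotic expansions or delicate special-function bounds.
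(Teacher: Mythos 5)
Your proof is correct, and while it follows the same overall strategy as the paper---use convexity of the divergence in $c$ to reduce the problem to comparing $c=1$ with $c=-1$, then reduce that comparison to a scalar inequality between a logarithm and the digamma function---it executes both reductions more cleanly. Where the paper collapses $\log\Gamma_m\bigl(\tfrac{\nu+1}{2}\bigr)-\log\Gamma_m\bigl(\tfrac{\nu-1}{2}\bigr)$ into $\log\tfrac{\Gamma(\nu)}{2^m\Gamma(\nu-m)}$ and then runs an induction on $m$ (base case $m=2$, inductive step peeling off one factor at a time, each step invoking the bound \eqref{psiineq}), you keep the difference in its factored form $\sum_{i=1}^{m}\bigl[\log\tfrac{\nu-i}{2}-\psi\bigl(\tfrac{\nu+1-i}{2}\bigr)\bigr]$ and observe that every summand is negative; this is precisely what the induction accomplishes term by term, so your direct summation subsumes it and is shorter. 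The second genuine difference is in the key scalar inequality $\log\bigl(x-\tfrac12\bigr)<\psi(x)$ (your $\log y<\psi\bigl(y+\tfrac12\bigr)$): the paper imports it from the bound $\log\bigl(x+\tfrac12\bigr)-\tfrac1x<\psi(x)$ of Elezovi\'c et al., and in fact leaves the intermediate deduction $\log\bigl(x+\tfrac12\bigr)-\tfrac1x\geq\log\bigl(x-\tfrac12\bigr)$ unstated, whereas your identity $\log y=\int_y^{y+1}\psi(t)\,dt$ combined with the midpoint inequality for the strictly concave $\psi$ gives a self-contained two-line proof with no appeal to the literature. The only point worth making explicit in your write-up is that each $y=(\nu-i)/2$ must be strictly positive, which requires $\nu>m$; this is the same standing assumption the paper works under, so nothing is lost.
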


\begin{proof}
For $c=1$, the Kullback--Leibler divergence is 
\begin{align*}
   KL(W_\nu\|W_{\nu+1}) = \log \Gamma_m\left(\frac{\nu+1}{2}\right) - \log\Gamma_m\left(\frac{\nu}{2}\right) - \frac{1}{2}\psi_m\left(\frac{\nu}{2}\right),
\end{align*}
while for $c=-1$, we obtain
\begin{align*}
    KL(W_\nu\|W_{\nu-1}) = \log \Gamma_m\left(\frac{\nu-1}{2}\right) - \log\Gamma_m\left(\frac{\nu}{2}\right) +\frac{1}{2}\psi_m\left(\frac{\nu}{2}\right).
\end{align*}
By taking the difference of the two divergences, we yield 
\begin{align}
   KL(W_\nu\|W_{\nu+1})-KL(W_\nu\|W_{\nu-1}) & = \log \Gamma_m\left(\frac{\nu+1}{2}\right)-\log \Gamma_m\left(\frac{\nu-1}{2}\right) -\psi_m\left(\frac{\nu}{2}\right) \\ 
   & = \log\left[\frac{\Gamma(\nu)}{2^m\Gamma(\nu-m)}\right] -\psi_m\left(\frac{\nu}{2}\right).\label{eq:number8}
\end{align}
We will prove that Eq.~\eqref{eq:number8} is always negative for any $\nu,m$ such that $\nu>m\geq2$. As $\nu>m$, $\nu=m+k$, for $k=1,2,\dots$, and  $m\geq2$, thus we obtain that the minimum Kullback--Leibler divergence is achieved at $c=1$ if
\begin{equation}\label{ineq1}
\log\left\{\frac{\Gamma(m+k)}{2^m\Gamma(k)}\right\} < \psi_m\left(\frac{m+k}{2}\right).
\end{equation} 
To prove the inequality in Eq.~\eqref{ineq1}, we rely on
\begin{equation}\label{psidef}
\psi_{m+1}\left(\frac{\nu+1}{2}\right) = \psi_{m}\left(\frac{\nu}{2}\right) + \psi \left(\frac{\nu+1}{2}\right),
\end{equation} 
and
\begin{equation}\label{psiineq}
\log\left(x-\tfrac{1}{2}\right) < \psi(x),
\end{equation} 
where Eq.~\eqref{psidef} comes from the definition of the multivariate digamma function, and inequality~\eqref{psiineq} can be deduced from $\log(x + \tfrac{1}{2}) - \frac{1}{x} < \psi(x) < \log(x + e^{-\gamma}) - \frac{1}{x}$, \citep[see][]{elezovic2000best} for $x>\tfrac{1}{2}$ and with $\gamma$ equal to $0.57721$ as the Euler-Mascheroni constant.

Initially, we assume it holds for a particular $m$, and then we prove for $m+1$:
\begin{align*} 
\log\left\{\frac{\Gamma(m+1+k)}{2^{m+1}\Gamma(k)}\right\} &< \psi_{m+1}\left(\frac{m+1+k}{2}\right) \\ 
\log\left\{\frac{\Gamma(m+k)(m+k)}{2^{m}\Gamma(k)2}\right\} &< \psi_{m}\left(\frac{m+k}{2}\right) + \psi\left(\frac{m+1+k}{2}\right)\\
\log\left\{\frac{\Gamma(m+k)}{2^{m}\Gamma(k)}\right\} + \log\left\{\frac{m+k}{2}\right\} &< \psi_{m}\left(\frac{m+k}{2}\right) + \psi\left(\frac{m+1+k}{2}\right),
\end{align*}
where in the last inequality we have $\log\frac{m+k}{2}<\psi\Big(\frac{m+1+k}{2}\Big)$ as a consequence of the result in inequality~\eqref{psiineq}. Thus, if the inequality~\eqref{ineq1} holds for $m$, then it holds for $m+1$, and it holds for any $k$. The smallest possible value $m$ is equal to $m=2$, where
\begin{align*} 
\log\left\{\frac{\Gamma(2+k)}{2^2\Gamma(k)}\right\} &< \psi_2\left(\frac{2+k}{2}\right)\\
\log\left\{\frac{(k+1)k}{2^2}\right\} &< \psi\left(\frac{2+k}{2}\right)+\psi\left(\frac{1+k}{2}\right),
\end{align*}
where we have $\log\left(\frac{k}{2}\right) < \psi\left(\frac{k+1}{2}\right)$ and $\log\left(\frac{k+1}{2}\right) < \psi\left(\frac{k+2}{2}\right)$ due to inequality~\eqref{psiineq}. Thus, inequality~\eqref{ineq1} holds for $m=2$ and, subsequently, it holds for any $m$. 
\end{proof}

We can then define the objective prior distribution for $\nu$ 
as 
\begin{equation} \label{objprior}
    \pi(\nu)\propto \frac{\Gamma\left(\frac{\nu+1}{2}\right)}{\Gamma\left(\frac{\nu+1-m}{2}\right)}e^{-\frac{1}{2}\sum_{i=1}^{m}\psi\left(\frac{\nu+1-i}{2}\right)}-1.
\end{equation}

{Figure~\ref{fig:priorplots} shows the loss-based prior for $\nu$ for three different values of $m\in \{3,7,15\}$, which represents the dimensionality of the macroeconomic dataset that we analyse in Section \ref{sec_Data_Macro}. Each line yields similar patterns and shows that the loss-based prior probability distribution decreases as the degrees of freedom $\nu$ increases.}


\begin{figure}[h!]
    \centering
    \includegraphics[scale=0.5]{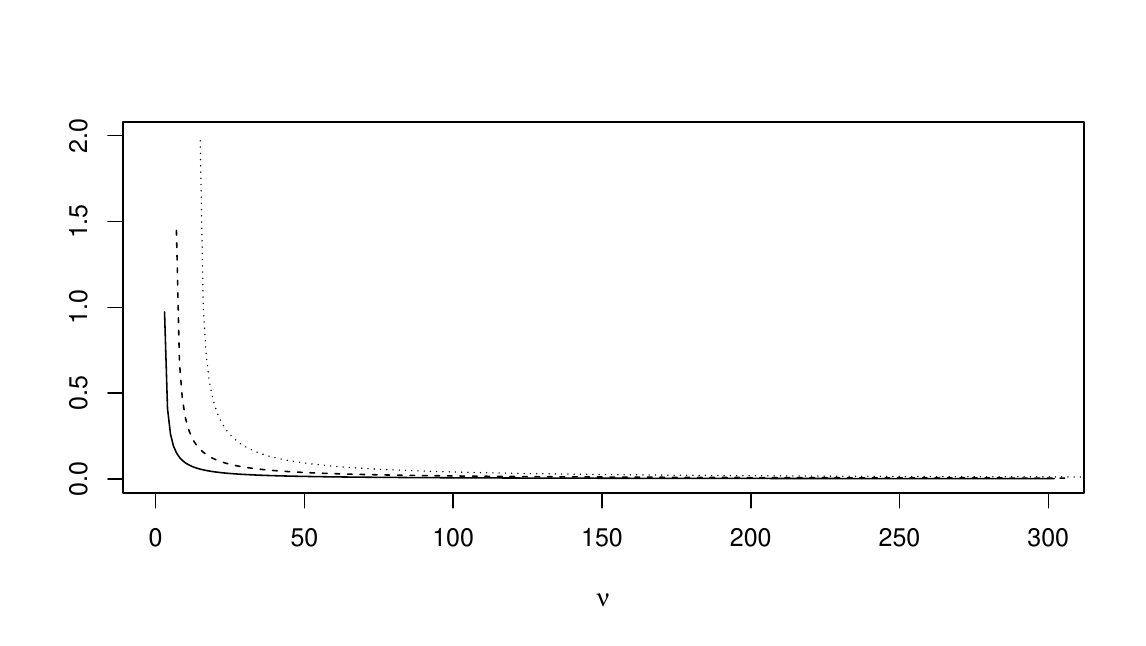}
    \caption{Loss-based prior distribution (unnormalised) for $\nu$ for dimensionality $m=3$ (continuous line), $m=7$ (dashed line), and $m=15$ (dotted line).}
    \label{fig:priorplots}
\end{figure}

\subsection{Properness of the posterior for $\nu$}\label{sc_posterior}
Let us assume that we observe one random matrix $\Sigma^{-1}$ from the Wishart $W(S_0^{-1},\nu)$, then, the likelihood function is given by
\begin{align} \label{loglik}
        p(y|\alpha,\Sigma) &= (2\pi)^{-\frac{mT}{2}}|\Sigma|^{-\frac{T}{2}} \exp\left\{ -\frac{1}{2}\left[y-(I_m \otimes X) \alpha\right]'\left(\Sigma^{-1}\otimes I_T\right)\left[y-(I_m \otimes X) \alpha\right] \right\} \nonumber \\
        &\propto |\Sigma|^{-\frac{T}{2}} \exp\left\{ -\frac{1}{2} \Tr\left[ (Y-XA)'(Y-XA)\Sigma^{-1} \right]  \right\}.
\end{align}
Using the loss-based prior for $\nu$ in Eq.~\eqref{objpriorgeneral}, we obtain the posterior distribution for the number of degrees of freedom as
\begin{align} \label{posterior1}
p(\nu|\Sigma^{-1}) & \propto \pi(\nu) \pi\left(\Sigma^{-1}|\nu,S_0^{-1}\right) \nonumber \\
 & \propto \left\{ \frac{\Gamma\left(\frac{\nu+1}{2}\right)}{\Gamma\left(\frac{\nu+1-m}{2}\right)}e^{-\frac{1}{2}\sum_{i=1}^{m}\psi \left( \frac{\nu+1-i}{2} \right)} - 1 \right\} \left\{\frac{|\Sigma^{-1}|^{(\nu-m-1)/2}e^{\left(-\Tr(S_0\Sigma^{-1})/2\right)}}{2^{\frac{\nu m}{2}}|S_0^{-1}|^{\nu/2}\Gamma_m\left(\frac{\nu}{2}\right)} \right\}.
\end{align}

Theorem~\ref{teo2} shows that the marginal posterior distribution for $\nu$ is proper. This is a necessary step when deriving objective priors.

\begin{theorem}\label{teo2}
The posterior distribution for the number of degrees of freedom $\nu$ in Eq.~\eqref{posterior1} is proper.
\end{theorem}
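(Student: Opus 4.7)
The plan is to show that the marginal normalising constant
\[
Z \;=\; \sum_{\nu \geq m} \pi(\nu)\, W\!\left(\Sigma^{-1}\,\big|\,\nu, S_0^{-1}\right)
\]
is finite; since $\pi(\nu) \propto e^{KL(W_\nu\|W_{\nu+1})} - 1 \ge 0$ by Theorem~\ref{teo1}, this suffices for posterior propriety. I would first observe that each individual summand is well defined and finite: for every integer $\nu \geq m$, the arguments of $\Gamma_m(\nu/2)$ and of $\Gamma((\nu+1-m)/2)$ are strictly positive, so all Gamma and digamma evaluations are finite, and the determinantal factors $|\Sigma^{-1}|$ and $|S_0^{-1}|$ are positive constants. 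The only real issue is therefore to control the tail of the series as $\nu \to \infty$.

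The heart of the argument is that the Wishart density, viewed as a function of $\nu$ for fixed $\Sigma^{-1}$, decays super-exponentially because of the $\Gamma_m(\nu/2)$ in its denominator. Using $\Gamma_m(\nu/2) = \pi^{m(m-1)/4}\prod_{j=1}^m \Gamma((\nu+1-j)/2)$ together with Stirling's formula on each factor, one derives
\[
\log \Gamma_m(\nu/2) \;=\; \tfrac{m\nu}{2}\log(\nu/2) \,-\, \tfrac{m\nu}{2} \,+\, O(\log \nu),
\]
whereas the remaining contributions $\tfrac{\nu-m-1}{2}\log|\Sigma^{-1}| - \tfrac{\nu m}{2}\log 2 - \tfrac{\nu}{2}\log|S_0^{-1}| - \tfrac{1}{2}\Tr(S_0 \Sigma^{-1})$ to the log Wishart density are at most linear in $\nu$. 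Combining these yields
\[
\log W(\Sigma^{-1}\mid \nu, S_0^{-1}) \;=\; -\tfrac{m\nu}{2}\log\nu \,+\, O(\nu),
\]
so this density vanishes faster than any geometric sequence.

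For the prior side, I would show that $\pi(\nu)$ grows at most polynomially in $\nu$. The elementary bounds $\Gamma((\nu+1)/2)/\Gamma((\nu+1-m)/2) = O(\nu^{m/2})$ and $\psi((\nu+1-i)/2) = O(\log \nu)$ imply that the first factor in \eqref{objprior} is polynomially bounded, and subtracting $1$ preserves that bound. Consequently the generic term in the marginal series satisfies
\[
\pi(\nu)\, W(\Sigma^{-1}\mid\nu, S_0^{-1}) \;=\; \exp\!\left\{-\tfrac{m\nu}{2}\log\nu + O(\nu)\right\},
\]
which is summable, for instance by the ratio test, completing the proof.

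The main technical step is the Stirling expansion of $\Gamma_m(\nu/2)$: once the dominant $-\tfrac{m\nu}{2}\log\nu$ term is exposed, the rest is bookkeeping, since any polynomial growth of $\pi(\nu)$ is crushed by this super-exponential decay. A sharper analysis, exploiting the fact that $KL(W_\nu\|W_{\nu+1}) \to 0$ at rate $O(1/\nu)$ because of the cancellation between the $\log\Gamma$ ratio and the digamma sum, would actually give $\pi(\nu) = O(1/\nu)$, but this refinement is not needed for propriety.
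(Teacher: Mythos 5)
Your argument is correct, and it follows the same overall strategy as the paper's proof --- factor the summand in Eq.~\eqref{posterior1} into the prior $\pi(\nu)$ and the Wishart density in $\nu$, then control each factor separately --- but the tools differ in a way worth noting. The paper invokes Abel's test: it asserts that $\{\pi(\nu)\}$ is bounded and monotone decreasing, and shows $\sum_{\nu}\pi(\Sigma^{-1}|\nu,S_0^{-1})<\infty$ by the ratio test, reducing the ratio to $\Gamma\left(\tfrac{\nu+1-m}{2}\right)/\Gamma\left(\tfrac{\nu+1}{2}\right)\to 0$. You instead expose the decay rate explicitly via Stirling's formula applied to $\Gamma_m(\nu/2)$, obtaining the super-exponential bound $\exp\{-\tfrac{m\nu}{2}\log\nu+O(\nu)\}$, and you only require the polynomial bound $\pi(\nu)=O(\nu^{m/2})$ on the prior rather than boundedness and monotonicity. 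This is a genuine advantage of your route: the monotonicity of $\pi(\nu)$, which Abel's test needs, is asserted in the paper without proof and is not immediate (it amounts to showing $KL(W_\nu\|W_{\nu+1})$ decreases in $\nu$), whereas your argument sidesteps it entirely because any polynomial growth of the prior is crushed by the super-exponential decay of the likelihood factor. The price is a slightly longer computation (the Stirling bookkeeping), but your tail estimate is sharper and the hypotheses you actually use are all verified within the proof.
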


\begin{proof}
We prove {by using Abel's test of convergence} that $\sum_{\nu=m}^{\infty}p(\nu|\Sigma^{-1})<\infty$. The sequence $\{\pi(\nu)\}$ is bounded {since} $\pi(m)>\pi(\nu)>0$ and is also monotone (decreasing). We show that $\sum_{\nu=m}^{\infty}\pi(\Sigma^{-1}|\nu,S_0^{-1})<\infty$ by using the ratio test
\begin{align*} 
R_\nu & = \frac{|\Sigma^{-1}|^{(\nu-m)/2}}{2^{\frac{(\nu+1)m}{2}}|S_0^{-1}|^{(\nu+1)/2}\Gamma_m\left(\frac{\nu+1}{2}\right)} \times \frac{2^{\frac{\nu m}{2}}|S_0^{-1}|^{\nu/2}\Gamma_m\left(\frac{\nu}{2}\right)}{|\Sigma^{-1}|^{(\nu-m-1)/2}}\\
& = \frac{|\Sigma^{-1}|^{1/2}\Gamma_m(\frac{\nu}{2})}{2^{m/2}|S_0^{-1}|^{1/2}\Gamma_m\left(\frac{\nu+1}{2}\right)}\\
& = \frac{|\Sigma^{-1}|^{1/2}\Gamma\left(\frac{\nu+1-m}{2}\right)}{2^{m/2}|S_0^{-1}|^{1/2}\Gamma\left(\frac{\nu+1}{2}\right)}.
\end{align*}
{Thus, we have}
$$\lim_{\nu\to\infty}\left\{\frac{\Gamma(\frac{\nu+1-m}{2})}{\Gamma(\frac{\nu+1}{2})}\right\} = 0,$$
so $\lim_{\nu\to\infty}R_\nu=0$ \citep{Abramowitz1972}, therefore, the posterior {distribution} for $\nu$ is a proper distribution.
\end{proof}

\subsection{Gibbs sampling algorithm}

Based on the {proposed loss-based} prior distribution, we provide a Gibbs sampler algorithm. In particular, the Monte Carlo Markov Chain (MCMC) algorithm follows the usual steps seen in the multivariate time series literature \citep[see,][]{Koop2010, Follett2019}, {where the new step relates to the full conditional posterior distribution of $\nu$}. We {implement} a Metropolis-Hastings algorithm since the posterior distribution of $\nu$ is not available in closed form.

To summarize, the Gibbs sampler is based on the following steps:
\begin{itemize}
   {\item[(i)] Update the vectorized matrix of coefficients, $\bm{\alpha}$ given the data $\textbf{y}$ and $\Sigma^{-1}$ by using the corrected triangular algorithm of \cite{Carriero2022}.}
    \item[(ii)] Update the precision matrix $\Sigma^{-1}$ given $\bm{\alpha}, \textbf{y}$ and $\nu$ from a Wishart distribution.
    {\item[(iii)] Update the local and global shrinkage parameters $\lambda_j^{\alpha}$ and $\tau^\alpha$ given the vectorized matrix of coefficients, $\bm{\alpha}$, as in \cite{Makalic2016}}.
    \item[(iv)] Update the degree of freedom $\nu$ given the $\Sigma^{-1}$ by using a Metropolis-Hastings algorithm with a symmetric random walk proposal.
 \end{itemize}
{For the model with fixed $\nu$ equal to $0$ or $m+1$, the Gibbs sampler is based only on Steps (i)--(iii).}

\section{Simulation study} \label{sec_simu}

We compare the performance of our loss-based hyperprior in different simulation studies {to the case of fixed $\nu$ equal to $0$ or $m+1$}. We {generate the data from a VAR model with one lag with different matrix dimensions and time lengths, where the elements of the matrix of coefficients of the VAR model are drawn from a $U(-0.95,0.95)$ distribution and then stationarity conditions are checked. We consider small, medium, and large numbers of response variables, where $m$ is equal to $5$, $10$, and $20$, respectively, and time, $T$, is equal to $30$ or $100$. Moreover, we have considered different combinations for the choice of the degrees of freedom when generating the data: for each dimension $m$, we have chosen $\nu$ equal to $\{5,10,15\}$, for $m=5$, $\{10,15,20\}$, for $m=10$, and $\{20,24,26\}$, for $m=20$. In Appendix~\ref{App_A}, we generate data as in the macroeconomic application with a time dimension equal to $240$ and the number of response variables equal to $3$, $7$, and $15$. The choice of $\nu$ is $\{3,5,7\}$ for $m=3$; $\{7, 10, 13\}$ for $m=7$ and $\{15, 20, 25\}$ for $m=15$.

For the comparative approaches, we use a flat prior as in \cite{Uhlig2005}, or we treat the degrees of freedom as fixed at $\nu = m+1$. We also assume an identity matrix for the prior scale matrix of the Wishart distribution. We assume a Horseshoe prior for the matrix of coefficients since the interest of our simulation experiment is in the evaluation of the covariance matrix. 
}

For each dataset, and each posterior sample, we estimate the posterior means of the {vectorized matrix of coefficients and of the covariance matrix $\Sigma$, and compute the Root Mean Absolute Deviation (RMAD)} between the posterior means and the true parameter values as
\begin{align*}
    RMAD = \bigg[ \frac{1}{N}\sum_{i=1}^{N}|\theta-\hat{\theta}| \bigg]^\frac{1}{2},
\end{align*}
where $N$ is the number of parameters estimated (which is equal to $m^2$ for the covariance matrix and depends on the lags for the matrix of coefficients) and $\theta$ is the matrix of coefficients or the covariance matrix. 

{This process is repeated $250$ times, and the Gibbs sampler is run for $6000$ iterations with a burn-in of $1000$ iterations. For each batch of RMAD, we create boxplots for the cases with fixed $\nu$ equal to $0$ or $m+1$, and the case with loss-based prior for the degrees of freedom. 
Figure~\ref{Fig_Sim_M5} shows the results for the RMAD for the case with $m=5$ and $T=30$, where the left panel explains the results when the data are generated with $\nu = 5$, the center with $\nu = 10$, and the right with $\nu = 15$.}

\begin{figure}[h!]
  	{\includegraphics[width=0.33\linewidth]{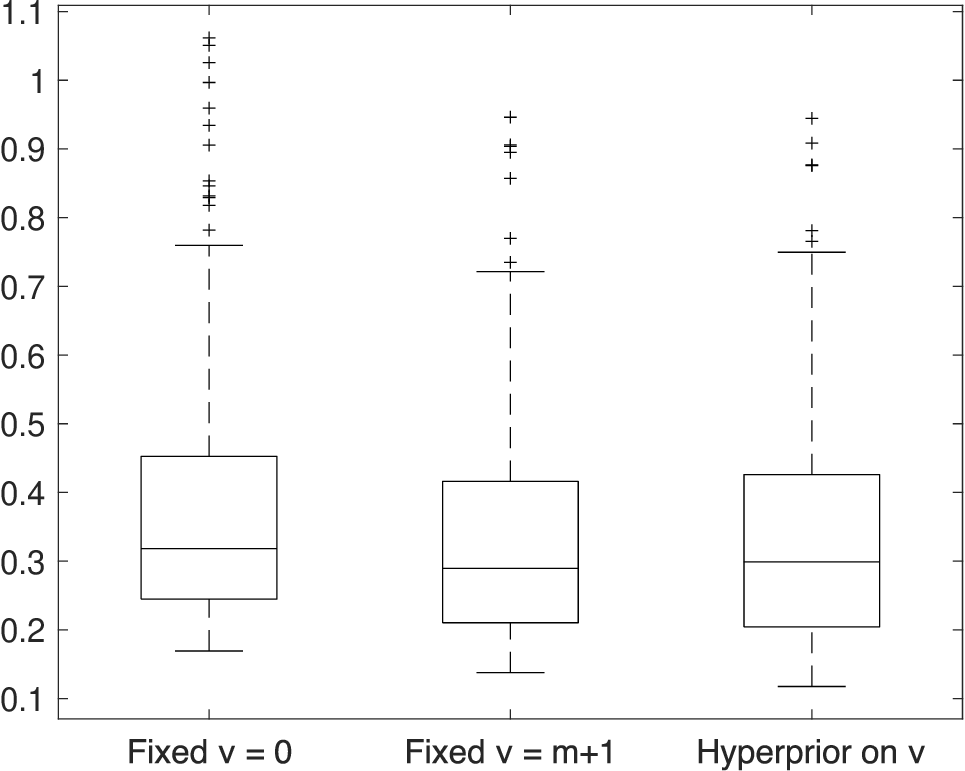}}\hfill%
	{\includegraphics[width=0.33\linewidth]{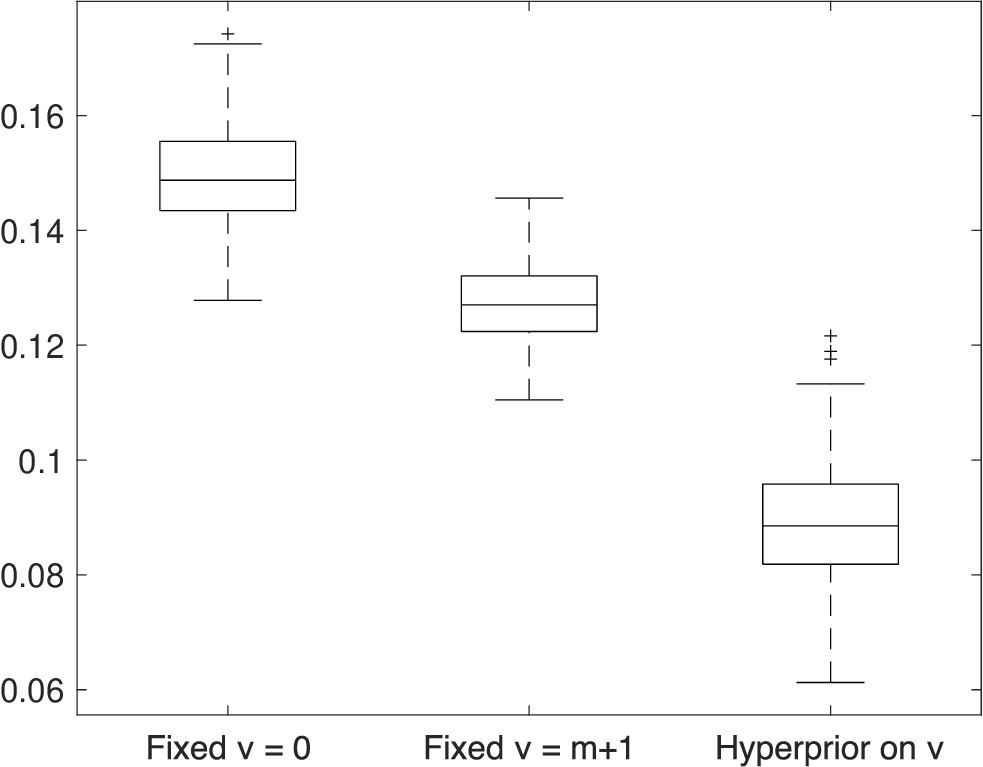}}\hfill%
	{\includegraphics[width=0.33\linewidth]{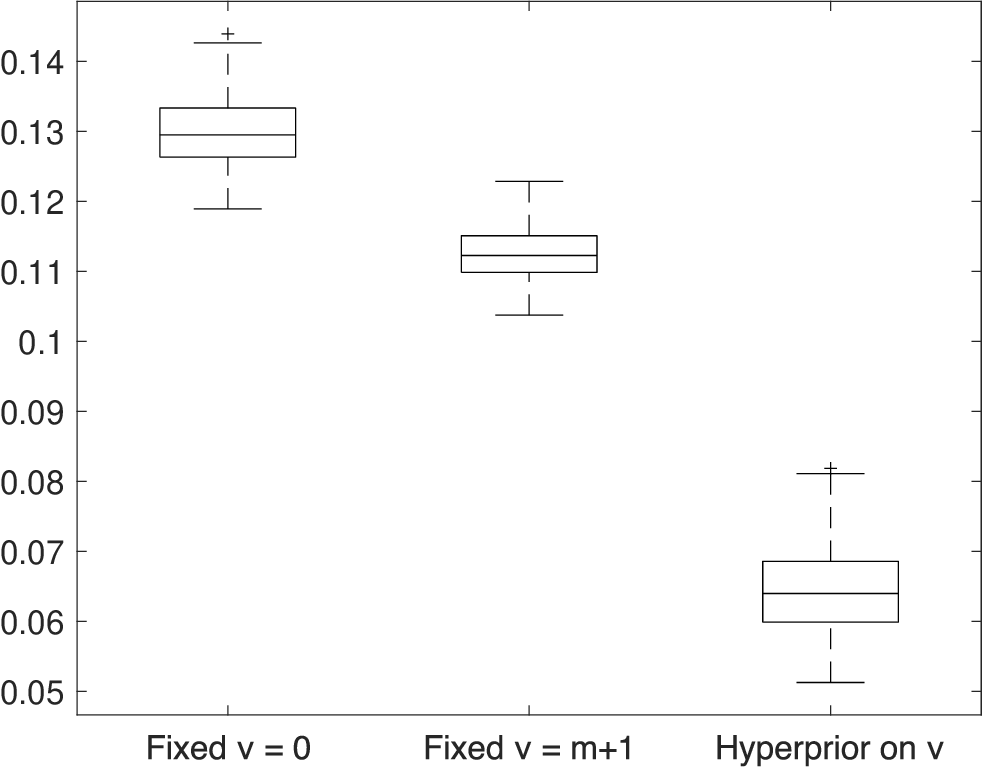}}
	\caption{{Monte Carlo Simulation - Root Mean Absolute Deviations (RMAD) of the covariance matrices. These distributions are obtained by simulating $250$ VAR(1) with dimension $m=5$ and sample size $T=30$. Results are reported for data generated from a Wishart distribution with $\nu = 5$ (left), $\nu=10$ (center), and $\nu = 15$ (right).}}
	\label{Fig_Sim_M5}
\end{figure}

{From Figure~\ref{Fig_Sim_M5}, we observe that differences in RMAD between the alternative priors increase when increasing $\nu$. The left panel shows no difference between the three cases, except for the outliers, which are smaller for our proposed hyperprior. Increasing $\nu$ to $10$ and $15$ leads to smaller RMAD when our hyperprior is used.} 

{As a second measure of evaluation of the proposed prior, in Table~\ref{tab:IRF_Sim_M5} we provide the RMAD for the orthogonal (Choleski) impulse response function evaluated at four different horizons ($h = 1, 3, 5$ and $7$). In particular, the column called Fixed $\nu=0$ refers to the \cite{Uhlig2005} flat prior and it is considered as the benchmark, and the other two columns (called Fixed $\nu = m+1$ and Hyperprior) provide the ratio with respect to the benchmark. If a value is greater than $1$, it means that the flat prior outperforms the other priors, while if the value is lower than $1$, the flat prior shows poorer performance. In Table~\ref{tab:IRF_Sim_M5}, increasing $\nu$ from $5$ to $15$ leads to strong improvements of around $12\%$ at horizon $3$ and $27\%$ at horizon $5$ for the proposed loss-based prior against the other priors. When $\nu = 5$, the differences between the hyperprior and the flat prior are small except for horizon $5$ and $7$, while for $\nu = 10$, the improvement ranges from 7\% at horizon $3$ to $31\%$ at horizon $7$.}

\begin{table}[h!]
    \centering
    \begin{adjustbox}{width=1\textwidth,center=\textwidth}
    \begin{tabular}{c|c c c| c c c| c c c}
    \hline 
    & \multicolumn{3}{c}{$\nu = 5$} & \multicolumn{3}{c}{$\nu = 10$} & \multicolumn{3}{c}{$\nu = 15$} \\
        Horizon & Fixed & Fixed & Hyperprior & Fixed & Fixed & Hyperprior & Fixed & Fixed & Hyperprior  \\
         & $\nu =0$ & $\nu = m+1$ & & $\nu =0$ & $\nu = m+1$ & & $\nu =0$ & $\nu = m+1$ \\
         \hline
         1 &  0.5700   &  0.9950  &   0.9940 &     0.3370   &  0.9990  &   0.9950 &   0.3030  &  1.0000  &  0.9980\\
    3 & 0.6400   &  1.0070  &   1.0070 &    0.4430   &  0.9780  &   0.9320 &     0.4890 &   0.9780 &   0.8880\\
   5 &  0.7260   &  0.9710  &   0.9610 &    0.9350   &  0.9350  &   0.7910 &    1.3200   & 0.9440  &  0.7300\\
    7 & 1.2780   &  0.9150  &   0.8950 &    2.4220   &  0.8960  &   0.6910 &    4.0800  &  0.9170  &  0.6150\\
         \hline 
    \end{tabular}
    \end{adjustbox}
    \caption{{Monte Carlo Simulation - RMAD of the impulse response functions (IRF) for four horizons $h=1, 3, 5$ and $7$ by simulating $250$ VAR(1) with dimension $m=5$ and sample size $T=30$. Column Fixed $\nu=0$ provides the RMAD of the IRF, while Columns Fixed $\nu = m+1$ and Hyperprior provide the ratio between the referred priors and the flat prior.}}
    \label{tab:IRF_Sim_M5}
\end{table}

These results are also confirmed in high dimensional cases as shown in Figures~\ref{Fig_Sim_M10} and \ref{Fig_Sim_M20} for the ten-dimensional and twenty-dimensional cases, respectively. In Figure~\ref{Fig_Sim_M10}, we compare our loss-based hyperprior with the fixed $\nu$ {equal to $0$ and $m+1$ for the data generated from a Wishart with degrees of freedom equal to $10$ (left panel), $15$ (center), and $20$ (right). In this scenario, the results demonstrate that our loss-based prior is an improvement over fixed $\nu$ for the case of $15$ and $20$ degrees of freedom, while for $\nu$ equal to $10$, we have small differences between the three prior representations.}

\begin{figure}[h!]
 	{\includegraphics[width=0.33\linewidth]{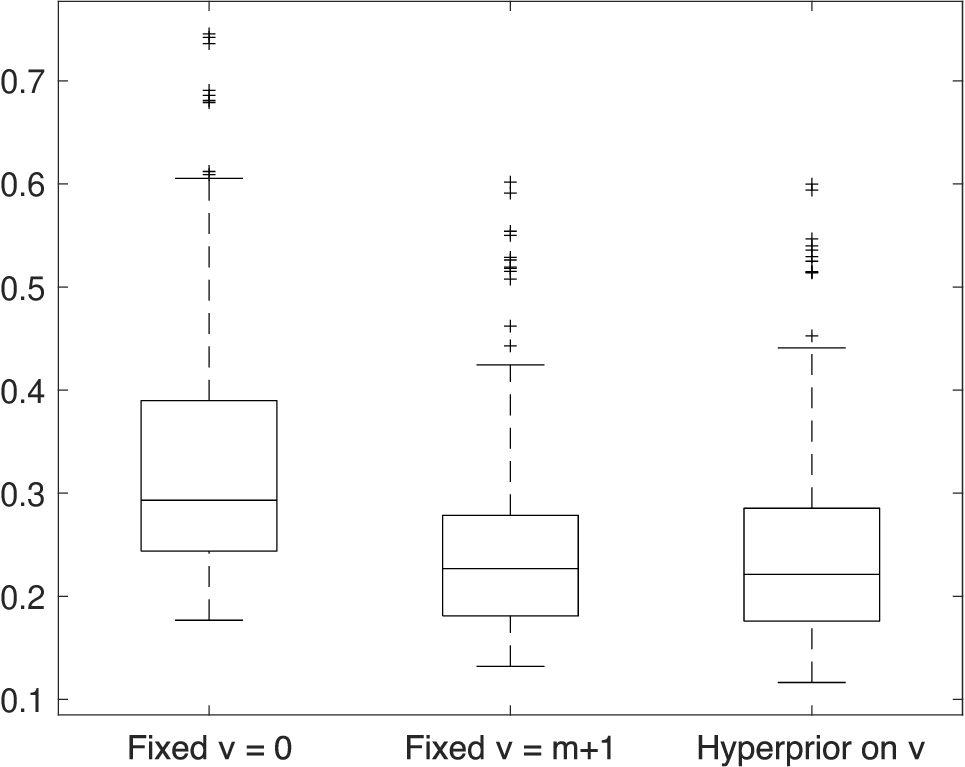}}\hfill%
	{\includegraphics[width=0.33\linewidth]{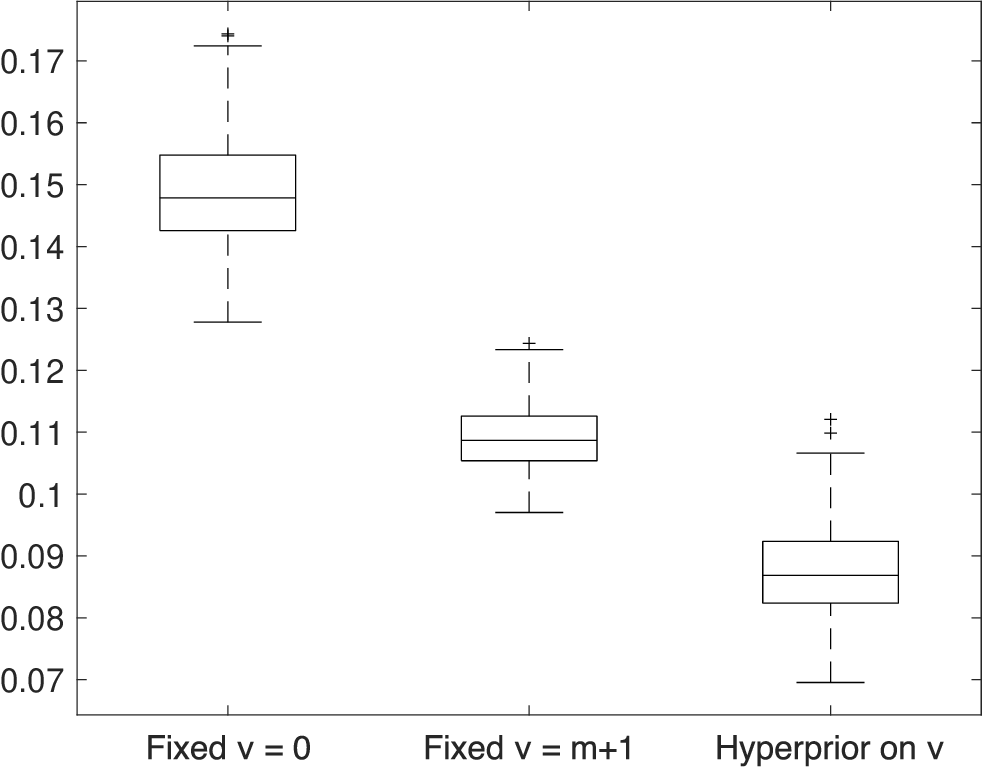}}\hfill%
	{\includegraphics[width=0.33\linewidth]{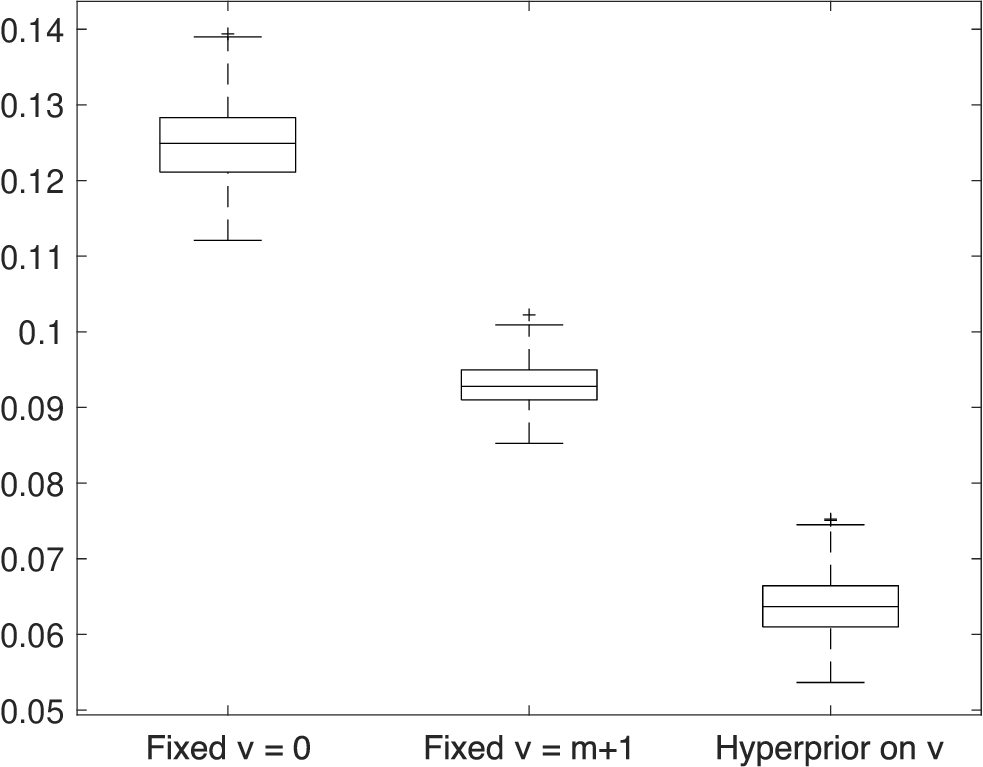}}
	\caption{{Monte Carlo Simulation - RMAD of the covariance matrices. These distributions are obtained by simulating $250$ VAR(1) with dimension $m=10$ and sample size $T=30$. Results are reported for data generated from a Wishart distribution with $\nu = 10$ (left), $\nu=15$ (center), and $\nu = 20$ (right).}}
	\label{Fig_Sim_M10}
\end{figure}

Table~\ref{tab:IRF_Sim_M10} provides the RMAD of the IRFs for different horizons for the scenario with $m=10$. These results confirm conclusions obtained from Figure~\ref{Fig_Sim_M10}, where our hyperprior outperforms alternatives when degrees of freedom are increased. {In detail, for $\nu$ equal to $10$, the loss-based prior shows strong differences when the horizons increase. For the medium and large cases, the loss-based prior outperforms the other two priors with fixed $\nu$ of around $5-7\%$ at horizon $3$ and $25-30\%$ at horizon 7. 
} 

\begin{table}[h!]
    \centering
    \begin{adjustbox}{width=1\textwidth,center=\textwidth}
    \begin{tabular}{c|c c c| c c c| c c c}
    \hline 
    & \multicolumn{3}{c}{$\nu = 10$} & \multicolumn{3}{c}{$\nu = 15$} & \multicolumn{3}{c}{$\nu = 20$} \\
        Horizon & Fixed & Fixed & Hyperprior & Fixed & Fixed & Hyperprior & Fixed & Fixed & Hyperprior  \\
         & $\nu =0$ & $\nu = m+1$ & & $\nu =0$ & $\nu = m+1$ & & $\nu =0$ & $\nu = m+1$ \\
         \hline
    1 & 0.6490 &   0.9960  &  0.9960 &  0.3820 &   0.9990     & 0.9990 &  0.3280 &   1.0000  &  1.0000 \\
    3 & 0.6360  &  1.0280  &  1.0290 &     0.4630  &  0.9720     & 0.9510 &     0.4740  &  0.9700  &  0.9300 \\
    5 & 1.2470  &  0.9180  &  0.9110 &     1.5100  &  0.9030     & 0.8230 &     1.8410  &  0.9140  &  0.7980 \\
    7 & 3.9530  &  0.8630  &  0.8530 &     5.6870  &  0.8550     & 0.7410 &     7.8950  &  0.8630  &  0.6990 \\
         \hline 
    \end{tabular}
    \end{adjustbox}
    \caption{{Monte Carlo Simulation - RMAD of the IRF for four horizons $h=1, 3, 5$ and $7$ by simulating $250$ VAR(1) with dimension $m=10$ and sample size $T=30$. Column Fixed $\nu=0$ provides the RMAD of the IRF, while Columns Fixed $\nu = m+1$ and Hyperprior provide the ratio between the referred priors and the flat prior.}}
    \label{tab:IRF_Sim_M10}
\end{table}

{For the twenty-dimensional case,  Figure~\ref{Fig_Sim_M20} reports the results for data generated from a Wishart with $20$ (left panel), $24$ (center), and $26$ (right) degrees of freedom with T equal to $30$. In this case, we observe that our loss-based prior and the fixed $\nu = m+1$ prior behave similarly for the left and center panels, when both outperform the flat prior. When data are generated from a Wishart distribution with $26$ degrees of freedom (right panel), our loss-based prior outperforms the fixed $\nu = m+1$ prior. Table~\ref{tab:IRF_Sim_M20} provides the results for the RMAD for the IRFs for the dimensionality $m=20$ and it confirms the findings from Figure~\ref{Fig_Sim_M20}. Hence, the improvements are strong when degrees of freedom are increased and when horizons are increased.}

\begin{figure}[h!]
 	{\includegraphics[width=0.33\linewidth]{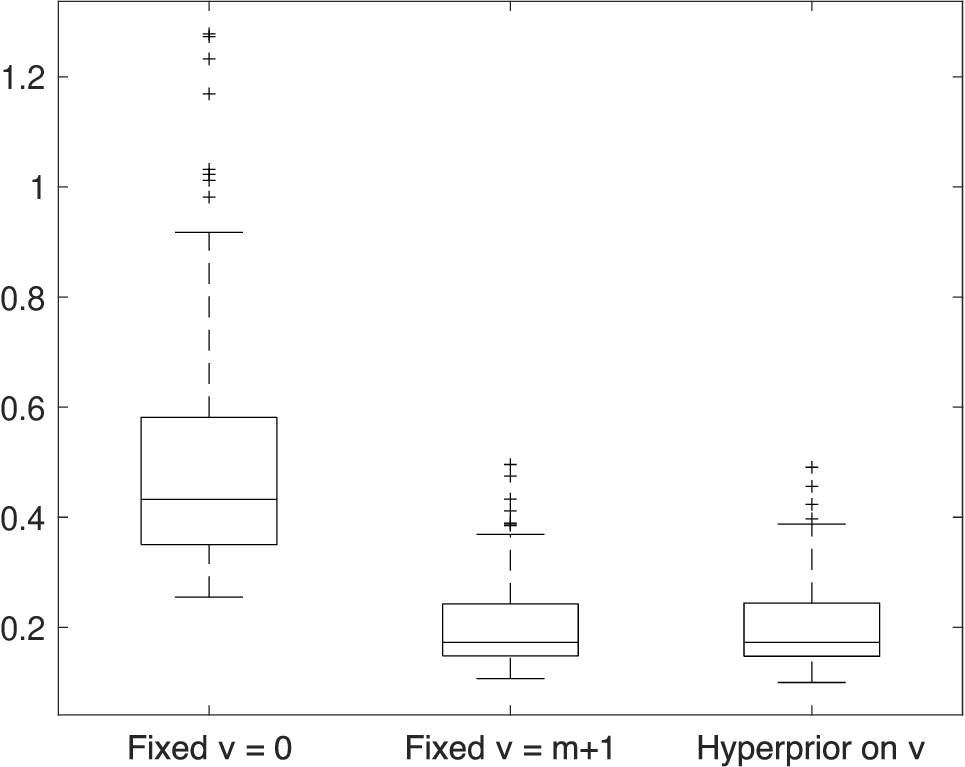}}\hfill%
	{\includegraphics[width=0.33\linewidth]{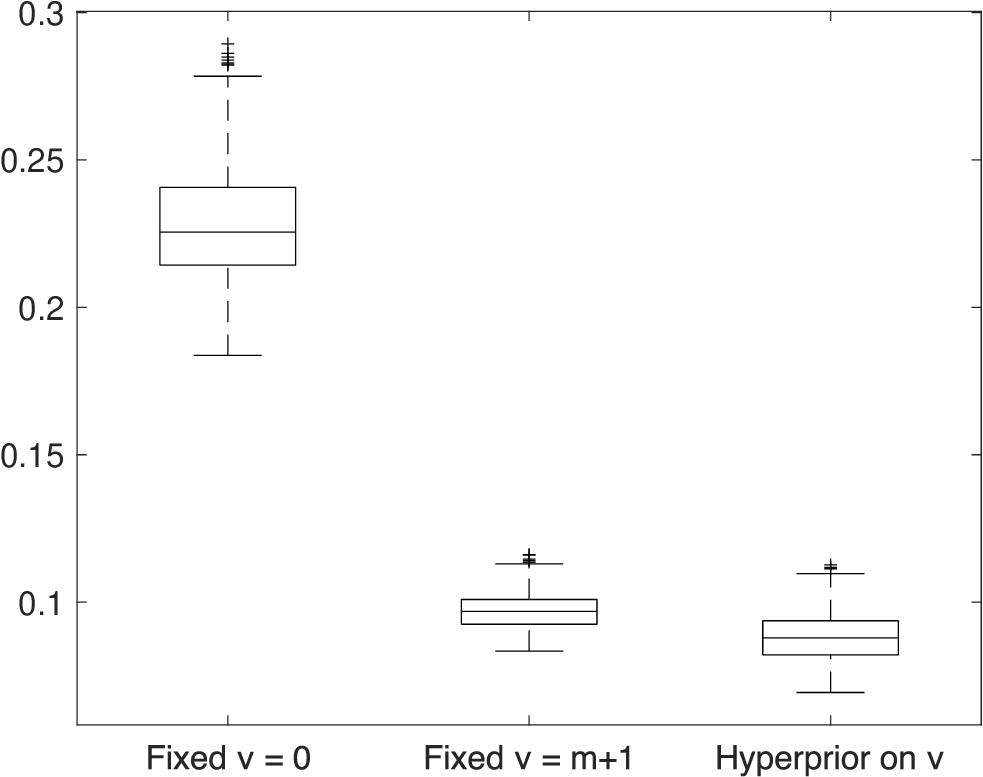}}\hfill%
	{\includegraphics[width=0.33\linewidth]{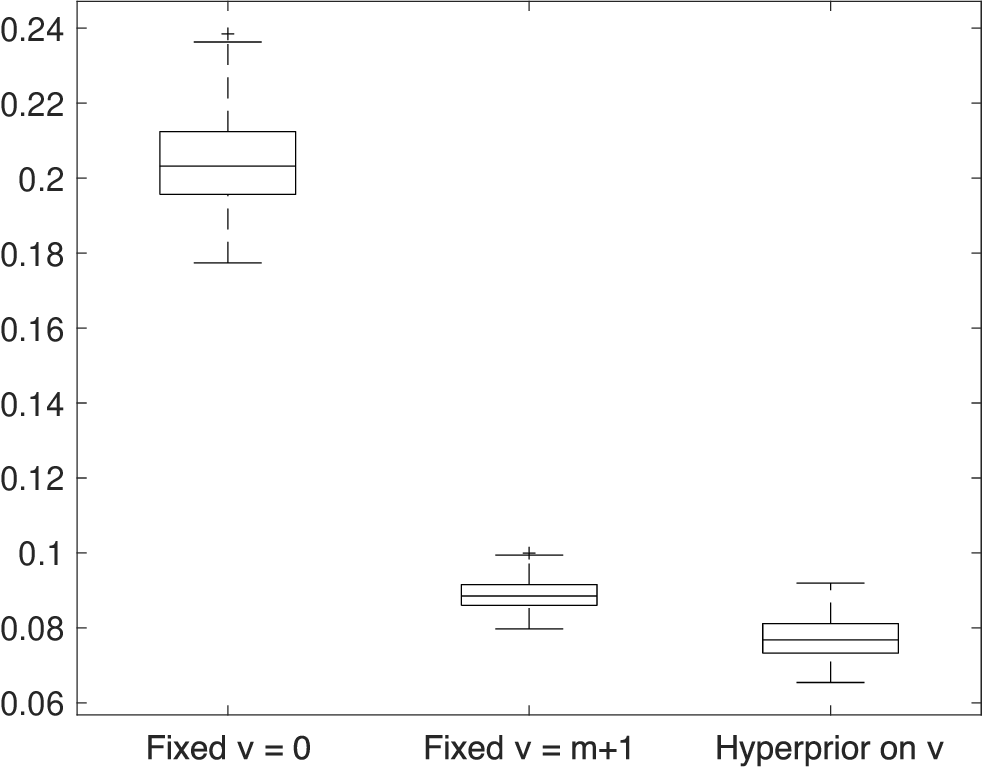}}
	\caption{{Monte Carlo Simulation - RMAD of the covariance matrices. These distributions are obtained by simulating $250$ VAR(1) with dimension $m=20$ and sample size $T=30$. Results are reported for data generated from a Wishart distribution with $\nu = 20$ (left), $\nu=24$ (center), and $\nu = 26$ (right).}}
	\label{Fig_Sim_M20}
\end{figure}


\begin{table}[h!]
    \centering
    \begin{adjustbox}{width=1\textwidth,center=\textwidth}
    \begin{tabular}{c|c c c| c c c| c c c}
    \hline 
    & \multicolumn{3}{c}{$\nu = 20$} & \multicolumn{3}{c}{$\nu = 24$} & \multicolumn{3}{c}{$\nu = 26$} \\
        Horizon & Fixed & Fixed & Hyperprior & Fixed & Fixed & Hyperprior & Fixed & Fixed & Hyperprior  \\
         & $\nu =0$ & $\nu = m+1$ & & $\nu =0$ & $\nu = m+1$ & & $\nu =0$ & $\nu = m+1$ \\
         \hline
    1 & 0.6640   &   1.0030   &   1.0030 &   0.4140  &  1.0020    & 1.0020 &  0.3820  &  1.0010  &  1.0010 \\
    3 & 0.6430   &   1.0060   &   1.0070 &     0.5090  &  0.9040    & 0.8950 &     0.5110  &  0.8900  &  0.8780 \\
    5 & 2.4470   &   0.8040   &   0.8050 &     2.6860   & 0.7930    & 0.7660 &     2.8390  &  0.7930  &  0.7590 \\
   7 & 12.9200   &   0.7340   &   0.7350 &    15.5480  &  0.7180    & 0.6840 &    17.1270  &  0.7190  &  0.6710 \\
         \hline 
    \end{tabular}
    \end{adjustbox}
    \caption{{Monte Carlo Simulation - RMAD of the IRF for four horizons $h=1, 3, 5$ and $7$ by simulating $250$ VAR(1) with dimension $m=20$ and sample size $T=30$. Column Fixed $\nu=0$ provides the RMAD of the IRF, while Columns Fixed $\nu = m+1$ and Hyperprior provide the ratio between the referred priors and the flat prior.}}
    \label{tab:IRF_Sim_M20}
\end{table}

These results are confirmed when T is equal to $100$ and $240$ as shown in Appendix~\ref{App_A}.

\section{Case study 1: Forecasting macroeconomic data}\label{sec_real}

In Section \ref{sec_Data_Macro} we first summarise the macroeconomic dataset \citep[see,][]{McCracken2016,McCracken2020}, while Section \ref{sec_macro} presents the forecasting measures and the main findings of the forecasting exercise by comparing the results of our loss-based hyperprior with respect to the fixed $\nu$\ prior.

\subsection{Data description} \label{sec_Data_Macro}
{The FRED data are a set of key US macroeconomic quantities sampled at a quarterly frequency from the second quarter of 1959 to the third quarter of 2019. All variables are transformed to be stationary by following the approach of \cite{McCracken2020} and we use three different sets of variables to estimate a small, medium, and large-scale VAR model.}

The small-scale VAR model considers three variables, which represent inflation, the Gross Domestic Product (GDP), and the target interest rate. In particular, the GDP is measured in Billions of Dollars, while inflation is measured by the GDP deflator, which computes the changes in prices for all goods and services produced in the economy and differs from the Consumer Price Index (CPI) because it is not based on a fixed basket of goods. The final variable is the effective Federal Funds Rate {(FEDFUNDS)}, which is the target interest rate set by the Federal Open Market Committee at which commercial banks borrow and lend their excess reserves to each other overnight.

For the medium-scale model, we additionally consider consumption, investment, and production variables. The real personal consumption expenditure (PCECC96) is a measure of consumer or household spending for a period and it is used to construct the PCE Price index, which measures the price changes in consumer goods and services in the US economy. Real gross private domestic investment (GPDIC1) is a component of the GDP and measures the quantity of money invested by private businesses in the domestic economy. The average weekly hours of production and nonsupervisory employees for the manufacturing sector (AWHMAN) relate to the average hours per worker for which pay was received, and it differs from the standard and scheduled hours. 

The large-scale model additionally uses the macroeconomic variables related to GDP, inflation, production, consumption, and investment jointly with private investment in the residential sector (PRFIx), consumption (PCECTPI), and common stock index based on the S\&P 500 index (SP500). The Industrial Production Index (INDPRO) measures the level of production and capacity in the manufacturing, mining, electric, and gas industries relative to 2012. Capacity Utilization (CUMFNS) captures the manufacturing and production capabilities that are being used by the economy at any given time, relating the output produced with the given resources and the potential output that can be produced if capacity is fully used. 
Lastly, CPI for All Urban Consumers (CPIAUCSL) measures the average change over time in the prices paid by consumers for a market basket of consumer goods and services.

 
\subsection{Forecasting  results}\label{sec_macro}

In this section, we evaluate the performance of our loss-based hyperprior with respect to the fixed $\nu$ prior by forecasting one-quarter ahead ($h=1$). 
We compare the predictive ability of the three different priors by using point and density forecasting measures. To evaluate the forecasting capability, we compute the root mean square error (RMSE), given by
\begin{equation} \label{rmse}
    RMSE_i =  \bigg[ \frac{1}{T-R}\sum_{t=R}^{T-1}(\hat{y}_{i,t+1}-y_{i,t+1})^2 \bigg]^\frac{1}{2},
\end{equation}
where $R$ is the length of the rolling window, $y_{i,t+1}$ is the observation for the $i$-th variable, and $\hat{y}_{i,t+1}$ is the one-step ahead prediction for the $i$-th variable.

In addition, we evaluate the density forecasting using the continuous ranked probability score (CRPS) introduced by \cite{Gneiting2007} and \cite{Gneiting2011}. The use of the CRPS has some advantages with respect to the log score since it weights values from the predictive density that are close to the outcome and it is less sensitive to outlier outcomes. The CRPS is defined such that a lower value indicates better performance, and is given by
\begin{equation} \label{crps}
\begin{split}
    CRPS_{t}(y_{t+1}) &= \int_{-\infty}^{+\infty} (F(z)-\mathbbm{1}(y_{t+1}\leq z))^2dz\\
    & = E_{f}|Y_{t+1}-y_{t+1}| - 0.5E_{f}|Y_{t+1}-Y'_{t+1}|,
\end{split}
\end{equation}
where $F(\cdot)$ is the cumulative distribution function associated with the posterior predictive density, $f$, $\mathbbm{1}(y_{t+1}\leq z)$ is an indicator function taking the value 1 if $y_{t+1}\leq z$ and 0 otherwise, and $Y_{t+1}$, $Y'_{t+1}$ are independent random draws from the posterior predictive density. 

In addition, we apply Diebold-Mariano $t$ tests \citep{Diebold1995} for equality of the average loss (with loss defined as the RMSE or CRPS) to compare the predictions of alternative models with the benchmark. The differences in accuracy that are statistically different from zero are denoted with one, two, or three asterisks, corresponding to significance levels of $10\%$, $5\%$, and $1\%$, respectively. 

%

{As stated in Section~\ref{sec_Data_Macro}, we use three sets of variables for small, medium, and large-scale VAR models. The small-scale VAR only includes three variables, the medium-scale VAR seven variables, and the large-scale VAR $15$ variables. Given the quarterly frequency of our data, we include $p=5$ lags for all of the models considered, and we fit the models using the MCMC algorithm with $6000$ iterations after discarding the first $1000$ iterations as burn-in. For forecasting, we use a rolling window size of $60$ quarters and run one-step ahead forecasts.}

Before evaluating forecasting performance, we briefly show the results of the inferred degrees of freedom using a rolling window estimation approach. We show the results for the three different datasets of the posterior mean of the degrees of freedom estimated by using our loss-based hyperprior. Figure~\ref{Fig_Macro_nu} shows the results of the estimated degrees of freedom jointly with the $95\%$ highest posterior density (HPD) and the degrees of freedom used in the {Horseshoe}-Wishart scenario with fixed $\nu = m+1$ (in red). 

From Figure~\ref{Fig_Macro_nu}, we observe strong evidence of changes in the estimated degrees of freedom across time. The left panel results indicate an increase in the degrees of freedom after 2000 and a fall around 2009, strongly linked to the Lehman Brothers failure. {These results are confirmed also in the medium-scale VAR model (center panel) with an increase also observed in the last period. These changes are less evident in the large-scale VAR model (right panel) but provide clear support for not fixing  $\nu$. These findings of changes in the degrees of freedom across time can be linked to the uncertainty literature \cite[see,][]{Bloom2014}, which shows countercyclical fluctuations during recessionary periods.}

\begin{figure}[h!]
	{\includegraphics[width=0.33\linewidth]{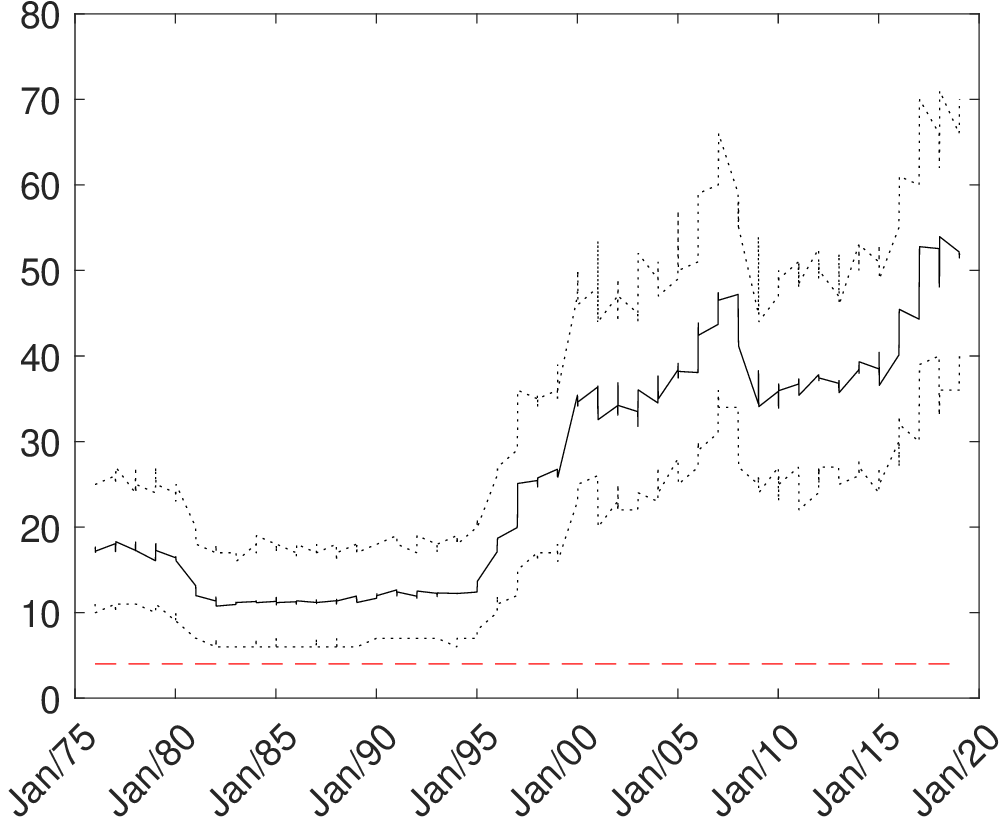}}\hfill%
	{\includegraphics[width=0.33\linewidth]{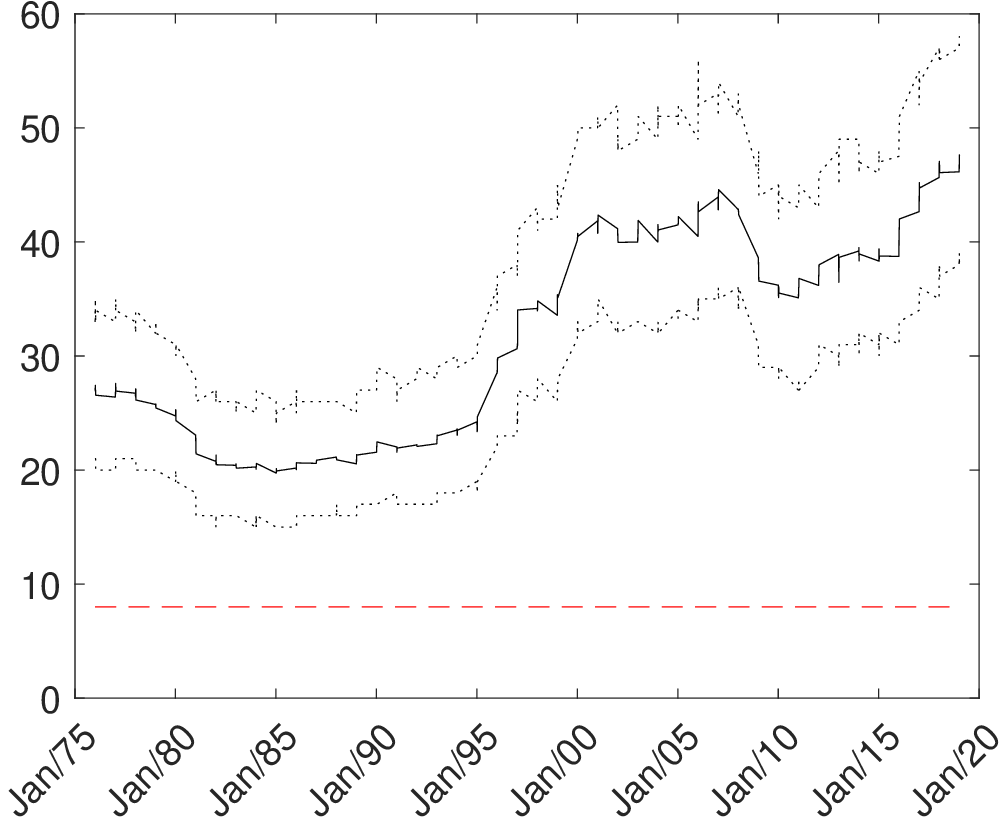}}\hfill%
	{\includegraphics[width=0.33\linewidth]{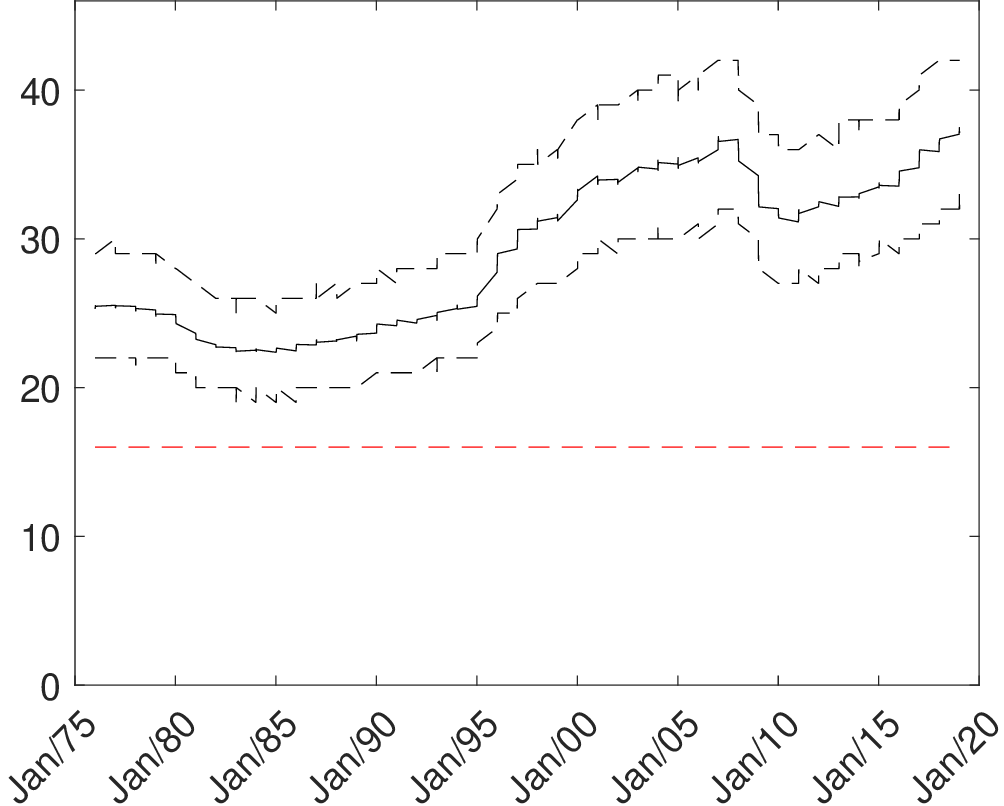}}
	\caption{Estimated degrees of freedom (solid line) for the loss-based hyperprior by using a rolling window of $60$ quarters with the $95\%$ Highest posterior density (dotted lines) and the case with {$\nu = m +1$, where $m \in \{3, 7, 15\}$} (red dashed line) for the macroeconomic data. The left panel is for the small-case; the center for the medium-scale and the right for the large-scale.}
	\label{Fig_Macro_nu}
\end{figure}

Table~\ref{Table_MacroSmall} shows the results for the small-case VAR for the RMSE (left panel) and the CRPS (right panel). {The column ``Fixed $\nu=0$'', which is the benchmark, reports the RMSE and the average CRPS, while the columns ``Fixed $\nu =m+1$'' and ``Hyperprior'', provide the ratio between each prior and the flat prior. When the ratio is less than $1$, it indicates that the model with loss-based hyperprior or with fixed $\nu = m+1$ outperforms the benchmark model with fixed $\nu = 0$.} In the point forecasting measure, the benchmark model is outperforming our hyperprior for the FEDFUNDS by a small amount. On the other hand, for the GDP, our hyperprior leads to an improvement of around $9\%$ with respect to the benchmark. This result is confirmed for the average CRPS, where our loss-based hyperprior outperforms the benchmark model by $15\%$ for real GDP growth and the GDP deflator, and by $3\%$ for FEDFUNDS. By using the Diebold-Mariano test, we show evidence of statistical significance for all variables when density forecasting is considered, while for point forecasting, the results are statistically significant for GDP and GDP deflator. {In addition, our loss-based prior shows improved density forecasting performance relative to the alternative fixed $\nu =m+1$ prior.}


\begin{table}[h!]
    \centering
    \begin{adjustbox}{width=1\textwidth,center=\textwidth}
    \begin{tabular}{c|c c c| c c c}
    \hline 
    & \multicolumn{3}{c}{RMSE} & \multicolumn{3}{|c}{average CRPS} \\
        Variable & Fixed & Fixed & Hyperprior & Fixed & Fixed & Hyperprior \\
         & $\nu =0$ & $\nu = m+1$ & & $\nu =0$ & $\nu = m+1$  \\
         \hline
        GDPC1 &  0.0092 &   0.9459 &   0.9116$^{\ast \ast \ast}$ & 0.0513  &  0.9653$^{\ast \ast \ast}$ &   0.8483$^{\ast \ast \ast}$ \\
    GDPCTPI & 0.0044  &  0.9484   & 0.9442$^{\ast \ast}$  &   0.0516  &  0.9616$^{\ast \ast \ast}$   & 0.8423$^{\ast \ast \ast}$ \\
    FEDFUNDS & 1.0206 &   0.9960  &  1.0012  &   0.4060  &  0.9880$^{\ast \ast \ast}$  &  0.9713$^{\ast \ast \ast}$ \\
         \hline 
    \end{tabular}
    \end{adjustbox}
\caption{RMSE {(Columns 2-4)} and average CRPS {(Columns 5-7)} for the small-case VAR for each prior. Column Fixed $\nu=0$ provides the RMSE and the average CRPS; Columns Fixed $\nu =m+1$ and hyperprior provide the ratio between the referred prior and the flat prior. $^{\ast \ast \ast}$, $^{\ast \ast}$ and $^\ast$ indicate ratios are significantly different from $1$ at the $1\%$, $5\%$ and $10\%$ significance level according to the Diebold-Mariano test.}
    \label{Table_MacroSmall}
\end{table}

For the 7-variable model, Table~\ref{Table_MacroMedium} presents the point and density measures, and similar results can be observed for both forecasting performance measures. {The main difference relates to the GDPCTPI, where the hyperprior approach outperforms the benchmark by about $11\%$ in point forecasting, whilst for the FEDFUNDS, there is little difference.  Moreover, we observe that the hyperprior model is outperforming the other models for the PCECC96 by about $11\%$, while for the GDP both the hyperprior and the model with fixed $\nu = m +1$ (with $m=7$) demonstrate around $7\%$ and $9\%$ improvement.} As in the small-scale VAR, the average CRPS shows better results with respect to the benchmark model across the $7$ variables.
In particular, the GDP, GDP deflator, and the FEDFUNDS, the hyperprior model outperforms the benchmark by between $4\%$ and $16\%$. This is also confirmed for the other variables analysed and from the Diebold-Mariano test in particular for the density forecasting measures.


\begin{table}[h!]
    \centering
    \begin{adjustbox}{width=1\textwidth,center=\textwidth}
    \begin{tabular}{c|c c c| c c c}
    \hline 
        & \multicolumn{3}{c}{RMSE} & \multicolumn{3}{|c}{average CRPS} \\
        Variable & Fixed & Fixed & Hyperprior & Fixed & Fixed & Hyperprior \\
         & $\nu =0$ & $\nu = m+1$ & & $\nu =0$ & $\nu = m+1$  \\
         \hline
     GDPC1 &       0.0095    &  0.9081$^{\ast \ast}$   &   0.9284$^{\ast \ast}$ &  0.0823  &  0.9304$^{\ast \ast \ast}$  &  0.8396$^{\ast \ast \ast}$ \\
   GDPCTPI &   0.0060   &   0.9155$^{\ast}$   &   0.8888$^{\ast \ast}$ &     0.0834  &  0.9252$^{\ast \ast \ast}$ &   0.8368$^{\ast \ast \ast}$ \\
 FEDFUNDS &      1.0048   &   1.0005  &    1.0002 &     0.4165 &   0.9821$^{\ast \ast \ast}$  &  0.9621$^{\ast \ast \ast}$ \\
 PCECC96 &     0.0082   &   0.9796   &   0.8926$^{\ast \ast}$ &     0.0837   & 0.9288$^{\ast \ast \ast}$ &   0.8335$^{\ast \ast \ast}$ \\
 GPDIC1 &     0.0342    &  0.9868    &  1.0029 &     0.0866   & 0.9300$^{\ast \ast \ast}$ &   0.8435$^{\ast \ast \ast}$ \\
AWHMAN &       0.2470    &  1.0003   &   0.9995 &     0.1545  &  0.9641$^{\ast \ast \ast}$  &  0.9308$^{\ast \ast \ast}$ \\
CES2000000008x &     0.0086  &  0.8936 &   0.9486 &     0.0859  &  0.9268$^{\ast \ast \ast}$  &  0.8336$^{\ast \ast \ast}$  \\
         \hline 
    \end{tabular}
    \end{adjustbox}
\caption{RMSE {(Columns 2-4)} and average CRPS {(Columns 5-7)} for the medium-case VAR for each prior. Column Fixed $\nu=0$ provides the RMSE and the average CRPS; Columns Fixed $\nu =m+1$ and hyperprior provide the ratio between the referred prior and the flat prior. $^{\ast \ast \ast}$, $^{\ast \ast}$ and $^\ast$ indicate ratios are significantly different from $1$ at the $1\%$, $5\%$ and $10\%$ significance level according to the Diebold-Mariano test.}
    \label{Table_MacroMedium}
\end{table}

The results for the large-scale VAR with $15$ variables are presented in Table~\ref{Table_MacroLarge}. For point forecasting, the three main variables of interest show similar improvements to the one in the medium-scale VAR. In fact, the loss-based hyperprior improves with respect to the benchmark by about $12\%$ for the GDP and $28\%$ for its deflator, while for the {FEDFUNDS} the situation is similar. The average CRPS demonstrates that the improvement is stronger for every variable, particularly for the FEDFUNDS. If we look at all $15$ variables analysed, the loss-based hyperprior model always outperforms the benchmark in a density forecasting scenario, which is also highlighted by the Diebold-Mariano test.
{These results are confirmed when looking at the prior with fixed $\nu$ equal to $m+1$, where our loss-based prior outperforms the other prior in density forecasting.}

\begin{table}[h!]
    \centering
    \begin{adjustbox}{width=1\textwidth,center=\textwidth}
    \begin{tabular}{c|c c c| c c c}
    \hline 
        & \multicolumn{3}{c}{RMSE} & \multicolumn{3}{|c}{average CRPS} \\
        Variable & Fixed & Fixed & Hyperprior & Fixed & Fixed & Hyperprior \\
         & $\nu =0$ & $\nu = m+1$ & & $\nu =0$ & $\nu = m+1$  \\
         \hline
GDPC1 &  0.0093  & 0.9105  & 0.8833$^{\ast \ast}$  &  0.0905 &   0.8542$^{\ast \ast \ast}$  &  0.7783$^{\ast \ast \ast}$  \\
GDPCTPI & 0.0074  & 0.8672 & 0.7153$^{\ast \ast \ast}$ &    0.0913 &   0.8520$^{\ast \ast \ast}$  &  0.7747$^{\ast \ast \ast}$ \\
FEDFUNDS &  0.9441  & 1.0057 & 1.0065&    0.4138  &  0.9613$^{\ast \ast \ast}$  &  0.9454$^{\ast \ast \ast}$ \\
PCECC96 & 0.0085  & 0.9471 & 0.9442$^{\ast \ast}$ &    0.0920  &  0.8502$^{\ast \ast \ast}$  &  0.7781$^{\ast \ast \ast}$ \\ 
GPDIC1 & 0.0324  & 1.0003 & 1.0185 &    0.0945  &  0.8529$^{\ast \ast \ast}$  &  0.7827$^{\ast \ast \ast}$ \\  
AWHMAN & 0.2484 & 0.9978 & 0.9892$^{\ast}$ &   0.1616  &  0.9286$^{\ast \ast \ast}$  &  0.8957$^{\ast \ast \ast}$ \\
CES2000000008x &  0.0091 & 0.8400$^{\ast \ast \ast}$ & 0.8939$^{\ast \ast}$ &    0.0939  &  0.8493$^{\ast \ast \ast}$  &  0.7753$^{\ast \ast \ast}$ \\
PRFIx &  0.0407  & 0.9914 & 0.9930 &    0.0970  &  0.8520$^{\ast \ast \ast}$  &  0.7813$^{\ast \ast \ast}$ \\
INDPRO &  0.0146  & 0.8374$^{\ast \ast \ast}$ & 0.9306$^{\ast \ast \ast}$ &    0.0953  &  0.8501$^{\ast \ast \ast}$  &  0.7770$^{\ast \ast \ast}$ \\
CUMFNS &  1.0338  & 0.9915 & 0.9844 &    0.5651  &  0.9586$^{\ast \ast \ast}$  &  0.9435$^{\ast \ast \ast}$ \\ 
SRVPRD &  0.0081 & 0.9427 & 0.9696 &    0.0965  &  0.8458$^{\ast \ast \ast}$  &  0.7727$^{\ast \ast \ast}$ \\
PCECTPI & 0.0104  & 0.7792$^{\ast \ast \ast}$ & 0.6835$^{\ast \ast \ast}$ &    0.0972  &  0.8484$^{\ast \ast \ast}$  &  0.7703$^{\ast \ast \ast}$ \\ 
GPDICTPI & 0.0094  &  0.9068$^{\ast \ast}$ & 0.8185$^{\ast \ast \ast}$ &    0.0980  &  0.8435$^{\ast \ast \ast}$  &  0.7703$^{\ast \ast \ast}$ \\
CPIAUCSL & 0.0111 & 0.9165 & 0.8336$^{\ast \ast}$ &    0.0992  &  0.8452$^{\ast \ast \ast}$  &  0.7668$^{\ast \ast \ast}$ \\
SP500 & 0.0679 & 0.9913$^{\ast \ast}$ & 1.0188 &    0.1065  &  0.8517$^{\ast \ast \ast}$  &  0.7830$^{\ast \ast \ast}$ \\  
         \hline 
    \end{tabular}
    \end{adjustbox}
\caption{RMSE {(Columns 2-4)} and average CRPS {(Columns 5-7)} for the large-case VAR for each prior. Column Fixed $\nu=0$ provides the RMSE and the average CRPS; Columns Fixed $\nu =m+1$ and hyperprior provide the ratio between the referred prior and the flat prior. $^{\ast \ast \ast}$, $^{\ast \ast}$ and $^\ast$ indicate ratios are significantly different from $1$ at the $1\%$, $5\%$ and $10\%$ significance level according to the Diebold-Mariano test.}
    \label{Table_MacroLarge}
\end{table}

\section{Case study 2: Dengue data}\label{sec_dengue}

{The second case study analyses the Google Dengue Trend (GDT) dataset, which tracks the Dengue incidence based on internet search patterns and clusters weekly queries for key terms related to the disease. We use GDT data from January 2011 to December 2014 for Argentina, Bolivia, Brazil, India, Indonesia, Mexico, Philippines, Singapore, Thailand, and Venezuela, thus having $10$ response variables. Following \cite{Davis2016}, we examine a VAR model with two lags, and we run a forecasting exercise with a rolling window of $104$ weekly observations.}

{In Figure~\ref{Fig_Dengue_nu} we present the posterior mean of the degrees of freedom from the model using our hyperprior (solid black line), the $95\%$ HPD (dotted line), and the fixed values of $\nu$ equal to $11$. This has been evaluated using rolling window estimation. The results indicate that the estimated degrees of freedom are often considerably larger than the fixed value of $11$. We observe some changes in values at the beginning of the sample and then a decrease before it remains relatively stationary.}

 \begin{figure}[h!]
 \centering
	{\includegraphics[width=0.33\linewidth]{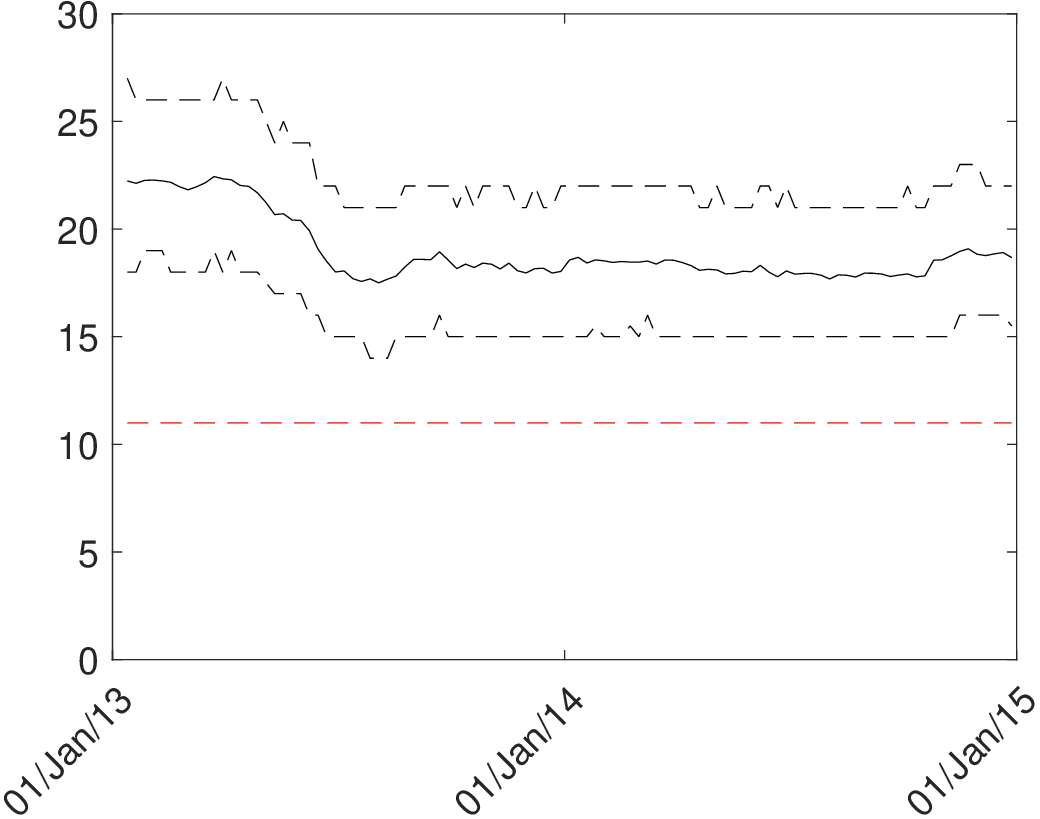}}\hfill%
	\caption{Estimated degrees of freedom (solid line) for the loss-based hyperprior by using a rolling window of $104$ weekly data with the $95\%$ Highest posterior density (dotted lines) and the case with {$\nu = m +1$, where $m = 10$} (red dashed line) for the Dengue data.}
	\label{Fig_Dengue_nu}
\end{figure}

{Table~\ref{Table_Dengue} shows the forecasting results for each country. For the RMSE, the proposed loss-based hyperprior leads to small improvements with respect to the benchmark prior and the other fixed prior for Bolivia, India, and Mexico, while for Brazil and Philippines, the model with fixed prior equal to $\nu = m +1$ is performing better with respect to the benchmark and slightly better than the proposed loss-based prior. When we look at the average CRPS for every country, we notice strong improvements against the benchmark model and the fixed $\nu$ equal to $m+1$ prior in all countries. In particular, the proposed loss-based prior outperforms the benchmark by $5\%$ for India, Philippines, and Brazil, and by $3\%$ for Argentina, Indonesia, Thailand, and Venezuela.} These results provide evidence of strong significance from density forecasting for all countries as shown by the Diebold-Mariano test, while from a point forecast measure, the statistical significance of the proposed prior is shown for a few countries (such as Bolivia, India, and Brazil).

%

\begin{table}[h!]
    \centering
    \begin{adjustbox}{width=1\textwidth,center=\textwidth}
    \begin{tabular}{c|c c c| c c c}
    \hline 
        & \multicolumn{3}{c}{RMSE} & \multicolumn{3}{|c}{average CRPS} \\
        Variable & Fixed & Fixed & Hyperprior & Fixed & Fixed & Hyperprior \\
         & $\nu =0$ & $\nu = m+1$ & & $\nu =0$ & $\nu = m+1$  \\
         \hline
   Argentina &  0.3475   &  1.0033   &  1.0026 & 0.1933   & 0.9864$^{\ast \ast \ast}$  &   0.9761$^{\ast \ast \ast}$  \\
   Bolivia &  0.2529   &  0.9982   &  0.9972$^{\ast}$ &    0.1682   & 0.9711$^{\ast \ast \ast}$   &  0.9601$^{\ast \ast \ast}$  \\
   Brazil & 0.2389   &  0.9894$^{\ast \ast}$   &  0.9899$^{\ast \ast \ast}$ &    0.1482  &  0.9618$^{\ast \ast \ast}$  &   0.9529$^{\ast \ast \ast}$  \\
   India &  0.2142   &  0.9912$^{\ast \ast}$   &  0.9905$^{\ast \ast}$ &     0.1607 &   0.9673$^{\ast \ast \ast}$   &  0.9494$^{\ast \ast \ast}$  \\
   Indonesia & 0.2696   &  1.0097   &  1.0079 &    0.1651  &  0.9787$^{\ast \ast \ast}$   &  0.9690$^{\ast \ast \ast}$  \\
   Mexico & 0.2425   &  0.9997   &  0.9986 &    0.1547  &  0.9749$^{\ast \ast \ast}$   &  0.9593$^{\ast \ast \ast}$  \\
   Philippines & 0.2513   &  0.9905$^{\ast \ast}$  &   0.9928 &     0.1532  &  0.9614$^{\ast \ast \ast}$   &  0.9488$^{\ast \ast \ast}$  \\
   Singapore & 0.7310   &  1.0068   &  1.0083 &     0.3501   & 0.9923$^{\ast}$  &  0.9912$^{\ast}$  \\
   Thailand & 0.3731    & 1.0033   &  1.0028 &     0.2010  &  0.9789$^{\ast \ast \ast}$   &  0.9710$^{\ast \ast \ast}$  \\
   Venezuela &  0.2563   &  1.0055   &  1.0019 &     0.1594  &  0.9831 $^{\ast \ast \ast}$   & 0.9692$^{\ast \ast \ast}$  \\

         \hline 
    \end{tabular}
    \end{adjustbox}
\caption{RMSE {(Columns 2-4)} and average CRPS {(Columns 5-7)} for the Dengue Data for each prior. Column Fixed $\nu=0$ provides the RMSE and the average CRPS; Columns Fixed $\nu =m+1$ and hyperprior provide the ratio between the referred prior and the flat prior. $^{\ast \ast \ast}$, $^{\ast \ast}$ and $^\ast$ indicate ratios are significantly different from $1$ at the $1\%$, $5\%$ and $10\%$ significance level according to the Diebold-Mariano test.}
    \label{Table_Dengue}
\end{table}

\section{Discussion}\label{sec_concl}
We have presented a novel method to perform forecasting in VAR models, where a hyperprior is set on the number of degrees of freedom of the covariance matrix for the {global-local-shrinkage}-Wishart prior. {The proposed loss-based prior takes into consideration only the intrinsic properties of the model, that is the sampling distribution and the priors. The method has been compared with what is currently used in the literature when no information about the parameter values of the {Horseshoe}-Wishart prior is available, where $\nu$ is typically fixed to $0$ or $m+1$.}

The analysis of simulated data has shown that, when the true value of $\nu$ is close to $m+1$, both approaches perform similarly. While, as one would expect, the farther the true $\nu$ is from $m+1$, the better the performance of the proposed prior. To illustrate in practice the advantage of having a loss-based prior on $\nu$, we have analysed its performance in terms of prediction on two datasets. One concerns macroeconomic variables from the FRED dataset, and the other is the analysis of infection variables from the GDT dataset. For both datasets, it appears that the proposed method outperforms the one currently used, in particular when the CRPS index is considered.

In support of our results, we have estimated the number of degrees of freedom using rolling windows. This analysis has shown that the data contains information for a value of $\nu$ always above the value of $m+1$, which justifies the use of the proposed method. In other words, as the data appear to have been ``generated'' by a Bayesian model with a relatively large number of degrees of freedom, by setting $\nu=m+1$, one impacts the predictive performance of the model. On the other hand, by assuming uncertainty of the value of $\nu$ by assigning an objective prior to it, the model is free to ``choose'' the most appropriate value of the parameter and, even considering the extra uncertainty that this implies, the results are better than the previous method.

{There are at least two possible research directions to further extend the work presented in this paper. Given the increasing importance of forecasting macroeconomic variables on policymakers' agendas, one may consider incorporating time-varying volatility, such as stochastic volatility, into the VAR framework when estimating the degrees of freedom through a loss-based prior \citep[see, e.g.,][]{Clark2015, Kastner2020}. Additionally, another interesting avenue is to explore the application of the proposed prior in analyzing matrix VAR in the form of the Wishart Autoregressive process of multivariate stochastic volatility \citep{Gourieroux2009}.}

\section*{Acknowledgments}
The authors are grateful to the two referees and the Associate Editor for the constructive comments on earlier versions of the paper which helped in improving the quality of this work. Luca Rossini acknowledges financial support from the Italian Ministry of University and Research (MUR) under the Department of Excellence 2023-2027 grant agreement ``Centre of Excellence in Economics and Data Science'' (CEEDS).

\bibliographystyle{apalike}
\bibliography{Biblio_Lossnu}

\begin{thebibliography}{}

\bibitem[Abramowitz and Stegun, 1972]{Abramowitz1972}
Abramowitz, M. and Stegun, I.~A. (1972).
\newblock {\em Handbook of Mathematical Functions with Formulas, Graphs, and
  Mathematical Tables}.
\newblock Dover, New York.

\bibitem[Ba{\'n}bura et~al., 2010]{Banbura2010}
Ba{\'n}bura, M., Giannone, D., and Reichlin, L. (2010).
\newblock Large {B}ayesian vector auto regressions.
\newblock {\em Journal of Applied Econometrics}, 25(1):71--92.

\bibitem[Berk, 1966]{Berk1966}
Berk, R. (1966).
\newblock Limiting behaviour of posterior distributions when the model is
  incorrect.
\newblock {\em Annals of Mathematical Statistics}, (37):51 -- 58.

\bibitem[Bernardi et~al., 2024]{Bernardi2024}
Bernardi, M., Bianchi, D., and Bianco, N. (2024).
\newblock Variational inference for large bayesian vector autoregressions.
\newblock {\em Journal of Business \& Economic Statistics}, pages 1--17.

\bibitem[Bernardo and Smith, 1994]{BS1994}
Bernardo, J.~M. and Smith, A. F.~M. (1994).
\newblock {\em Bayesian Theory}.
\newblock Wiley, New York.

\bibitem[Bloom, 2014]{Bloom2014}
Bloom, N. (2014).
\newblock Fluctuations in uncertainty.
\newblock {\em Journal of Economic Perspectives}, 28(2):153 -- 176.

\bibitem[Carneiro and Mylonakis, 2009]{Carneiro2009}
Carneiro, H.~A. and Mylonakis, E. (2009).
\newblock Google {Trends}: {A} {Web}-{Based} {Tool} for {Real}-{Time}
  {Surveillance} of {Disease} {Outbreaks}.
\newblock {\em Clinical Infectious Diseases}, 49(10):1557--1564.

\bibitem[Carriero et~al., 2022]{Carriero2022}
Carriero, A., Chan, J., Clark, T.~E., and Marcellino, M. (2022).
\newblock Corrigendum to ``large bayesian vector autoregressions with
  stochastic volatility and non-conjugate priors''{$[$}j. econometrics 212 (1)
  (2019) 137--154{$]$}.
\newblock {\em Journal of Econometrics}, 227(2):506--512.

\bibitem[Carriero et~al., 2015]{Carriero2015}
Carriero, A., Clark, T.~E., , and Marcellino, M. (2015).
\newblock Bayesian {VAR}s: {Specification} {Choices} and {Forecast} {Accuracy}.
\newblock {\em Journal of Applied Econometrics}, 30(1):46--73.

\bibitem[Carriero et~al., 2019]{Carriero2019}
Carriero, A., Clark, T.~E., and Marcellino, M. (2019).
\newblock Large {B}ayesian vector autoregressions with stochastic volatility
  and non-conjugate priors.
\newblock {\em Journal of Econometrics}, 212(1):137--154.

\bibitem[Carvalho et~al., 2010]{Carvalho2010}
Carvalho, C.~M., Polson, N.~G., and Scott, J.~G. (2010).
\newblock The horseshoe estimator for sparse signals.
\newblock {\em Biometrika}, 97(2):465--480.

\bibitem[Clark and Ravazzolo, 2015]{Clark2015}
Clark, T.~E. and Ravazzolo, F. (2015).
\newblock Macroeconomic forecasting performance under alternative
  specifications of time-varying volatility.
\newblock {\em Journal of Applied Econometrics}, 30(4):551--575.

\bibitem[Cross et~al., 2020]{Cross2020}
Cross, J.~L., Hou, C., and Poon, A. (2020).
\newblock Macroeconomic forecasting with large {B}ayesian {VAR}s:
  {G}lobal-local priors and the illusion of sparsity.
\newblock {\em International Journal of Forecasting}, 36(3):899--915.

\bibitem[Davis et~al., 2016]{Davis2016}
Davis, R.~A., Zang, P., and Zheng, T. (2016).
\newblock {Sparse Vector Autoregressive Modeling}.
\newblock {\em Journal of Computational and Graphical Statistics},
  25(4):1077--1096.

\bibitem[De~Mol et~al., 2008]{DeMol2008}
De~Mol, C., Giannone, D., and Reichlin, L. (2008).
\newblock {Forecasting using a large number of predictors: Is Bayesian
  shrinkage a valid alternative to principal components?}
\newblock {\em Journal of Econometrics}, 146(2):318--328.

\bibitem[Diebold and Mariano, 1995]{Diebold1995}
Diebold, F. and Mariano, R. (1995).
\newblock Comparing predictive accuracy.
\newblock {\em Journal of Business and Economic Statistics}, 13(3):253--263.

\bibitem[Doan et~al., 1984]{Doan1984}
Doan, T., Litterman, R., and Sims, C. (1984).
\newblock Forecasting and conditional projection using realistic prior
  distributions.
\newblock {\em Econometric Reviews}, 3(1):1--100.

\bibitem[Elezovic et~al., 2000]{elezovic2000best}
Elezovic, N., Giordano, C., and Pecaric, J. (2000).
\newblock {The best bounds in Gautschi's inequality}.
\newblock {\em Mathematical Inequalities \& Applications}, 3(2):239--252.

\bibitem[Follett and Yu, 2019]{Follett2019}
Follett, L. and Yu, C. (2019).
\newblock Achieving parsimony in bayesian vector autoregressions with the
  horseshoe prior.
\newblock {\em Econometrics and Statistics}, 11:130--144.

\bibitem[Geweke, 1992]{Geweke:1992gm}
Geweke, J. (1992).
\newblock {Evaluating the accuracy of sampling-based approaches to calculating
  posterior moments}.
\newblock In Bernardo, J.~M., Berger, J., Dawid, A.~P., and Smith, J. F.~M.,
  editors, {\em Bayesian Statistics 4}, pages 169--193. Oxford University
  Press, Oxford.

\bibitem[Gneiting and Raftery, 2007]{Gneiting2007}
Gneiting, T. and Raftery, A.~E. (2007).
\newblock Strictly proper scoring rules, prediction, and estimation.
\newblock {\em Journal of the American statistical Association},
  102(477):359--378.

\bibitem[Gneiting and Ranjan, 2011]{Gneiting2011}
Gneiting, T. and Ranjan, R. (2011).
\newblock Comparing density forecasts using threshold-and quantile-weighted
  scoring rules.
\newblock {\em Journal of Business \& Economic Statistics}, 29(3):411--422.

\bibitem[Gourieroux et~al., 2009]{Gourieroux2009}
Gourieroux, C., Jasiak, J., and Sufana, R. (2009).
\newblock The wishart autoregressive process of multivariate stochastic
  volatility.
\newblock {\em Journal of Econometrics}, 150(2):167--181.

\bibitem[Gruber and Kastner, 2023]{Gruber2023}
Gruber, L. and Kastner, G. (2023).
\newblock Forecasting macroeconomic data with bayesian vars: Sparse or dense?
  it depends!
\newblock {\em arXiv preprint arXiv:2206.04902}.

\bibitem[Huber and Feldkircher, 2019]{Huber2019}
Huber, F. and Feldkircher, M. (2019).
\newblock {Adaptive Shrinkage in Bayesian Vector Autoregressive Models}.
\newblock {\em Journal of Business \& Economic Statistics}, 37(1):27--39.

\bibitem[Kastner and Huber, 2020]{Kastner2020}
Kastner, G. and Huber, F. (2020).
\newblock Sparse bayesian vector autoregressions in huge dimensions.
\newblock {\em Journal of Forecasting}, 39(7):1142--1165.

\bibitem[Koop and Korobilis, 2010]{Koop2010}
Koop, G. and Korobilis, D. (2010).
\newblock Bayesian multivariate time series methods for empirical
  macroeconomics.
\newblock {\em Foundations and Trends in Econometrics}, 3(4):267--358.

\bibitem[Koop and Korobilis, 2013]{Koop2013}
Koop, G. and Korobilis, D. (2013).
\newblock {Large time-varying parameter VARs}.
\newblock {\em Journal of Econometrics}, 177(2):185--198.

\bibitem[Kullback and Leibler, 1951]{KL1951}
Kullback, S. and Leibler, R.~A. (1951).
\newblock On information and sufficiency.
\newblock {\em Annals of Mathematical Statistics}, (22):79--86.

\bibitem[Litterman, 1986]{Litterman1986}
Litterman, R.~B. (1986).
\newblock {Forecasting with Bayesian vector autoregressions -- five years of
  experience}.
\newblock {\em Journal of Business \& Economic Statistics}, 4(1):25--38.

\bibitem[Makalic and Schmidt, 2016]{Makalic2016}
Makalic, E. and Schmidt, D.~F. (2016).
\newblock A simple sampler for the horseshoe estimator.
\newblock {\em IEEE Signal Processing Letters}, 23(1):179--182.

\bibitem[McCracken and Ng, 2016]{McCracken2016}
McCracken, M.~W. and Ng, S. (2016).
\newblock {FRED-MD}: {A} monthly database for macroeconomic research.
\newblock {\em Journal of Business \& Economic Statistics}, 34(4):574--589.

\bibitem[McCracken and Ng, 2020]{McCracken2020}
McCracken, M.~W. and Ng, S. (2020).
\newblock {FRED-QD: A Quarterly database for macroeconomic research}.
\newblock {\em Federal Reserve Bank of St. Louis Review}.

\bibitem[Merhav and Feder, 1998]{MF1998}
Merhav, N. and Feder, M. (1998).
\newblock Universal prediction.
\newblock {\em IEEE Trans. Neural Net}, (20):1087--1101.

\bibitem[Plummer et~al., 2006]{Coda06}
Plummer, M., Best, N., Cowles, K., and Vines, K. (2006).
\newblock {CODA}: {C}onvergence {D}iagnosis and {O}utput {A}nalysis for {MCMC}.
\newblock {\em R News}, 6(1):7--11.

\bibitem[Sims, 1980]{Sims1980}
Sims, C.~A. (1980).
\newblock Macroeconomics and reality.
\newblock {\em Econometrica}, 48(1):1--48.

\bibitem[Sims and Zha, 1998]{Sims1998}
Sims, C.~A. and Zha, T. (1998).
\newblock {B}ayesian methods for dynamic multivariate models.
\newblock {\em International Economic Review}, 39(4):949--968.

\bibitem[Stock and Watson, 2002]{Stock2002}
Stock, J.~H. and Watson, M.~W. (2002).
\newblock {Macroeconomic Forecasting Using Diffusion Indexes}.
\newblock {\em Journal of Business \& Economic Statistics}, 20(2):147--162.

\bibitem[Strauss et~al., 2017]{Strauss2017}
Strauss, R.~A., Castro, J.~S., Reintjes, R., and Torres, J.~R. (2017).
\newblock {Google dengue trends: An indicator of epidemic behavior. The
  Venezuelan Case}.
\newblock {\em International Journal of Medical Informatics}, 104:26--30.

\bibitem[Uhlig, 2005]{Uhlig2005}
Uhlig, H. (2005).
\newblock What are the effects of monetary policy on output? results from an
  agnostic identification procedure.
\newblock {\em Journal of Monetary Economics}, 52(2):381--419.

\bibitem[Villa and Walker, 2015]{Villa2015}
Villa, C. and Walker, S.~G. (2015).
\newblock An objective approach to prior mass functions for discrete parameter
  spaces.
\newblock {\em Journal of the American Statistical Association},
  110(511):1072--1082.

\end{thebibliography}

\appendix

\renewcommand{\thesection}{A.\arabic{section}}
\renewcommand{\theequation}{A.\arabic{equation}}
\renewcommand{\thefigure}{A.\arabic{figure}}
\renewcommand{\thetable}{A.\arabic{table}}
\setcounter{table}{0}
\setcounter{figure}{0}
\setcounter{equation}{0}

\section{Loss-based prior}\label{LBprior}
{In this section we illustrate the derivation of the loss-based prior in Equation \eqref{objpriorgeneral}.

Consider a probability distribution $f(x|\theta)$, where $\theta\in\Theta$ is an unknown discrete parameter.
The prior mass to be put on $\theta$ is derived by considering what is lost if the model $f(x|\theta)$ is removed and it is the true one. {The approach associates a {\em worth} to each parameter value measured by applying a result available in \cite{Berk1966} which states that, if a model is misspecified (i.e. if $\theta$ is removed and it is the true value) then the posterior distribution asymptotically accumulates at the value $\theta^\prime$ such that the Kullback--Leibler divergence \citep{KL1951} $KL(f(\cdot|\theta)\|f(\cdot|\theta^\prime))$ is minimised.} That is, if the true model is removed, the estimation process will asymptotically indicate as the correct model the nearest one, in terms of the Kullback--Leibler divergence; i.e., the model which is the most similar to the true one \citep{BS1994}. To link the {\em worth} of each parameter value to the prior probability, we use the {\em self-information} loss function. This particular type of loss function assigns a loss to a probability statement and, say we have defined prior $\pi(\theta)$, its form is $-\log\pi(\theta)$. More information about the self-information loss function can be found, for example, in \cite{MF1998}.

To formally derive the prior for $\theta$, we can proceed in terms of utilities, instead of losses; this approach allows for a clearer exposition and does not impact the logic behind the prior derivation. Let us then write utility {$u_1(\theta)=\log\pi(\theta)$}. We then let the minimum divergence from $\theta$ be represented by utility $u_2(\theta)$. We naturally want $u_1(\theta)$ and $u_2(\theta)$ to be matching utility functions; though as it stands {$-\infty<u_1\leq0$} and $0\leq u_2<\infty$, and we want $u_1=-\infty$ when $u_2=0$. The scales are matched by taking exponential transformations, so $\exp(u_1)$ and $\exp(u_2)-1$ are on the same scale and we obtain
\begin{align*}
    \pi(\theta) = e^{u_1(\theta)}\propto e^{u_2(\theta)}-1,
\end{align*}
yielding the loss-based prior for $\theta$ in Eq.~\eqref{objpriorgeneral}.}

\section{Further simulation results}
\label{App_A}

\subsection{Case $T=100$}

In this section, we provide further simulation results. We report the root mean absolute deviation (RMAD) for the case with $T=100$. In particular, Figure~\ref{Fig_Sim_M5_Supp} shows the RMAD for the covariance matrix for the five-dimensional case, when the data are generated from a Wishart distribution with degrees of freedom equal to $5$ (left), $10$ (center), and $15$ (right). Table~\ref{tab:IRF_Sim_M5_Supp} provides the RMAD for the impulse response function at four different horizons $h=1,3,5$ and $7$. 

\begin{figure}[h!]
 	{\includegraphics[width=0.33\linewidth]{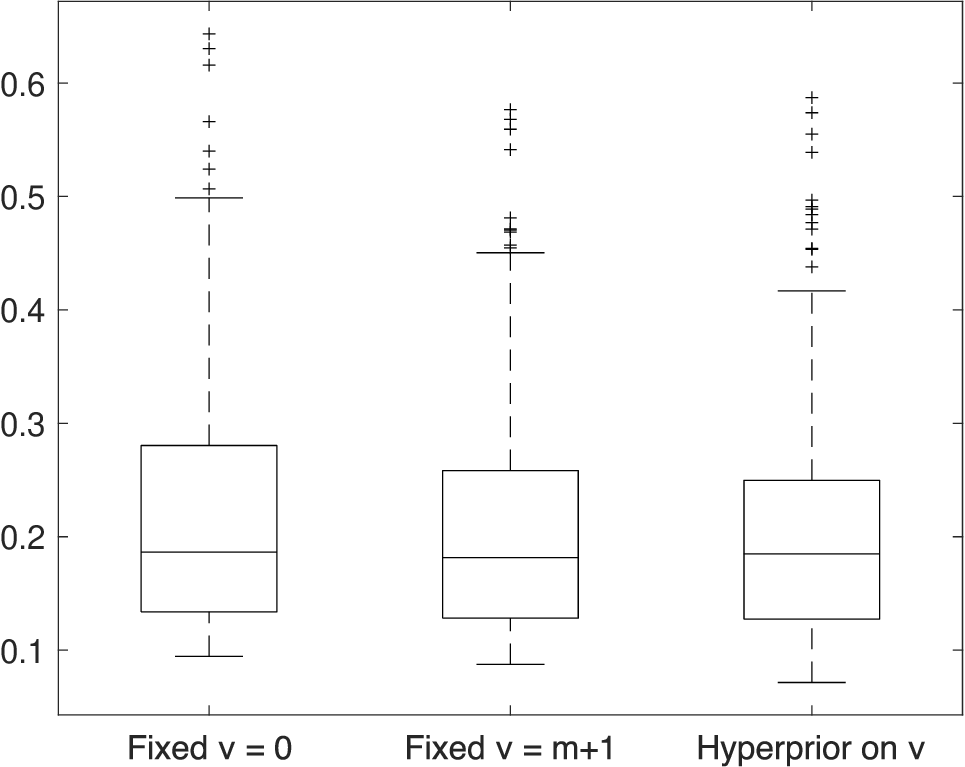}}\hfill%
	{\includegraphics[width=0.33\linewidth]{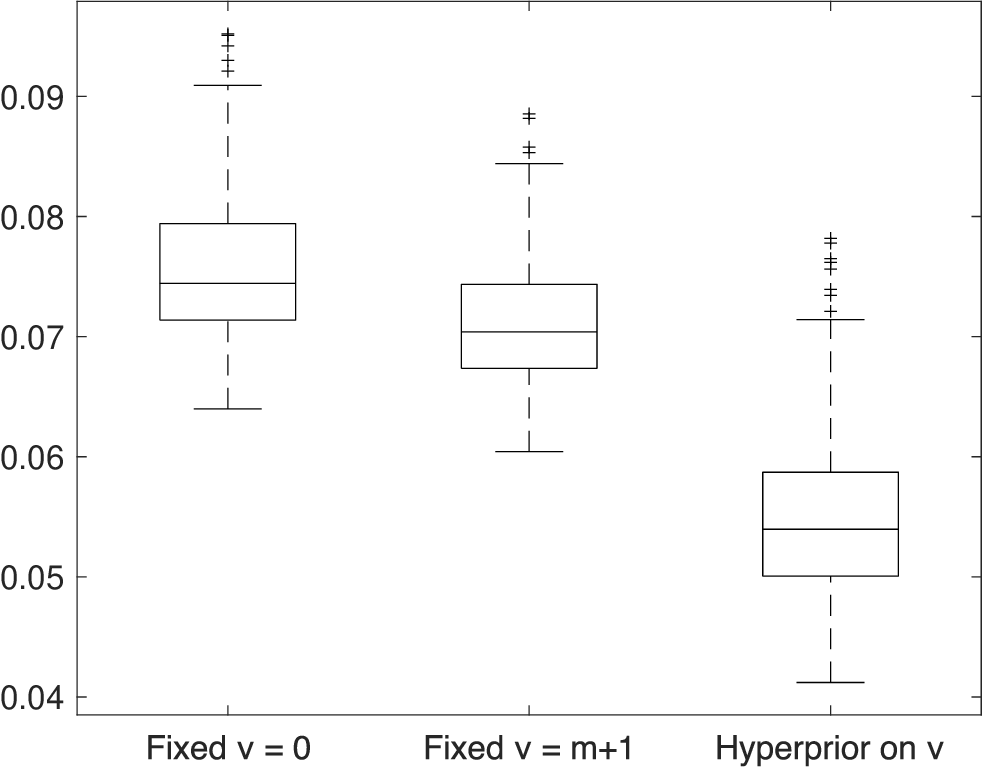}}\hfill%
	{\includegraphics[width=0.33\linewidth]{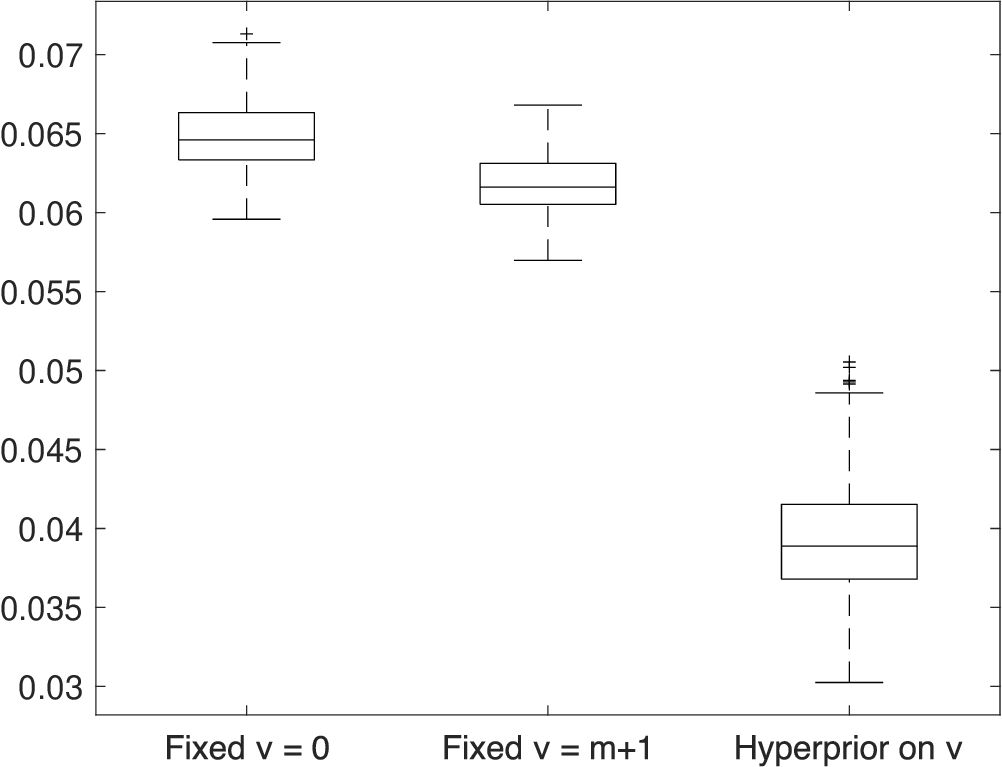}}
 \caption{
 {Monte Carlo Simulation - RMAD of the covariance matrices. These distributions are obtained by simulating $250$ VAR(1) with dimension $m=5$ and sample size $T=100$. Results are reported for data generated from a Wishart distribution with $\nu = 5$ (left), $\nu=10$ (center), and $\nu = 15$ (right).}}
	\label{Fig_Sim_M5_Supp}
\end{figure}

\begin{table}[h!]
    \centering
    \begin{adjustbox}{width=1\textwidth,center=\textwidth}
    \begin{tabular}{c|c c c| c c c| c c c}
    \hline 
    & \multicolumn{3}{c}{$\nu = 5$} & \multicolumn{3}{c}{$\nu = 10$} & \multicolumn{3}{c}{$\nu = 15$} \\
        Horizon & Fixed & Fixed & Hyperprior & Fixed & Fixed & Hyperprior & Fixed & Fixed & Hyperprior  \\
         & $\nu =0$ & $\nu = m+1$ & & $\nu =0$ & $\nu = m+1$ & & $\nu =0$ & $\nu = m+1$ \\
         \hline
     1 &        0.4110  &  0.9960  &  0.9950 &        0.2830  &  0.9980  &  0.9890 &    0.2720 &   0.9990    0.9900 \\
   3 & 0.4860  &  1.0010  &  1.0010 &     0.4150   & 0.9880  &  0.9820 &     0.3160  &  0.9940  &  0.9330 \\
   5 & 0.4240  &  0.9960 &   0.9940 &     0.3310  &  0.9860  &  0.9180 &     0.4380  &  0.9800  &  0.8050 \\
  7 &   0.3030  &  0.9970  &  0.9820 &     0.4400  &  0.9720  &  0.8410 &     0.7390  &  0.9670  &  0.6940 \\
         \hline 
    \end{tabular}
    \end{adjustbox}
    \caption{{Monte Carlo Simulation - RMAD of the IRF for four horizons $h=1, 3, 5$ and $7$ by simulating $250$ VAR(1) with dimension $m=5$ and sample size $T=100$. Column Fixed $\nu=0$ provides the RMAD of the IRF, while Columns Fixed $\nu = m+1$ and Hyperprior provide the ratio between the referred priors and the flat prior.}}  
    \label{tab:IRF_Sim_M5_Supp}
\end{table}

\FloatBarrier

In Figure~\ref{Fig_Sim_M10_Supp}, we show results for a ten-dimensional case for the matrix of covariance and with data generated with $10$ (left), $15$ (center), and $20$ (right) degrees of freedom. Table~\ref{tab:IRF_Sim_M10_Supp} shows the IRF for the same dataset for four different horizons.

\begin{figure}[h!]
	{\includegraphics[width=0.33\linewidth]{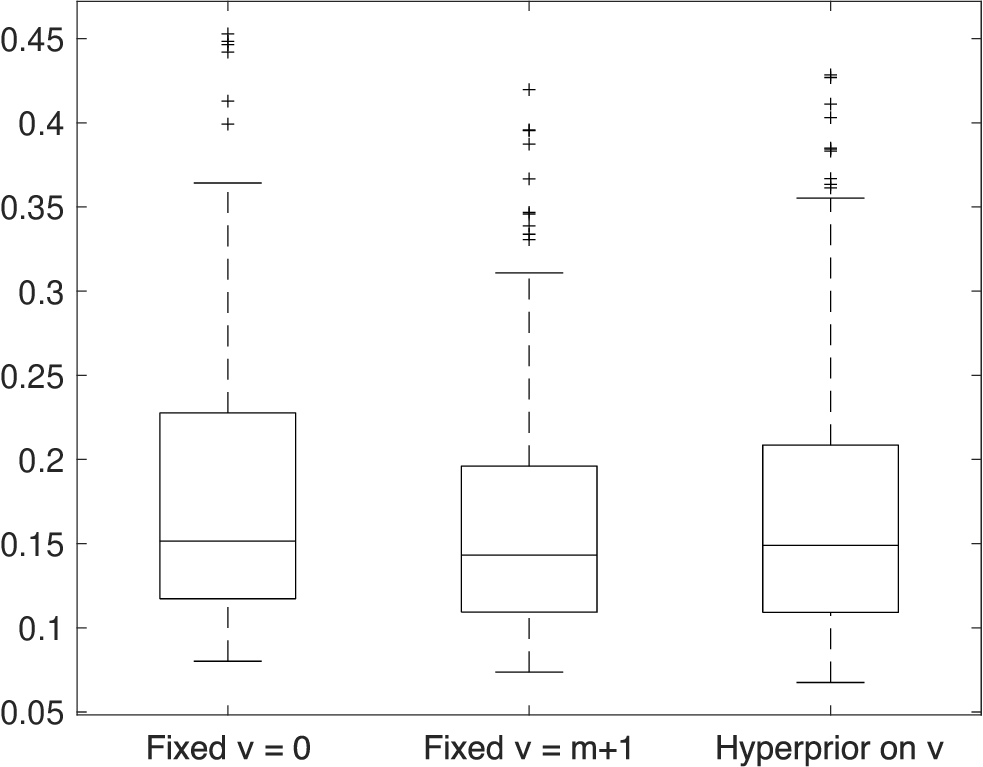}}\hfill%
	{\includegraphics[width=0.33\linewidth]{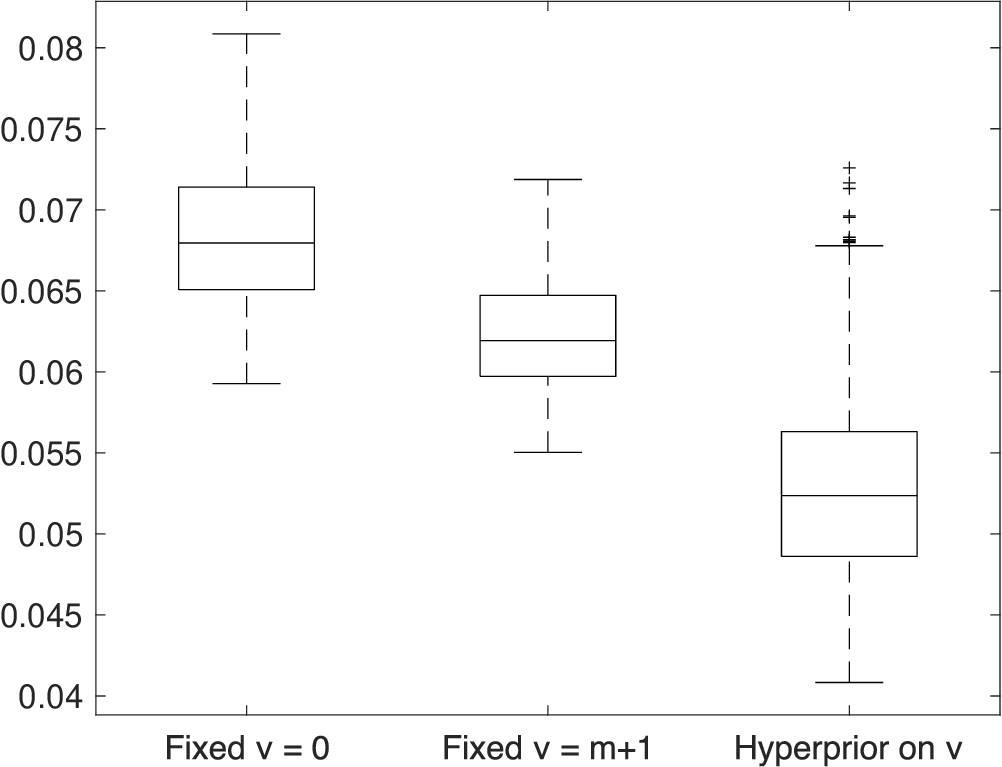}}\hfill%
	{\includegraphics[width=0.33\linewidth]{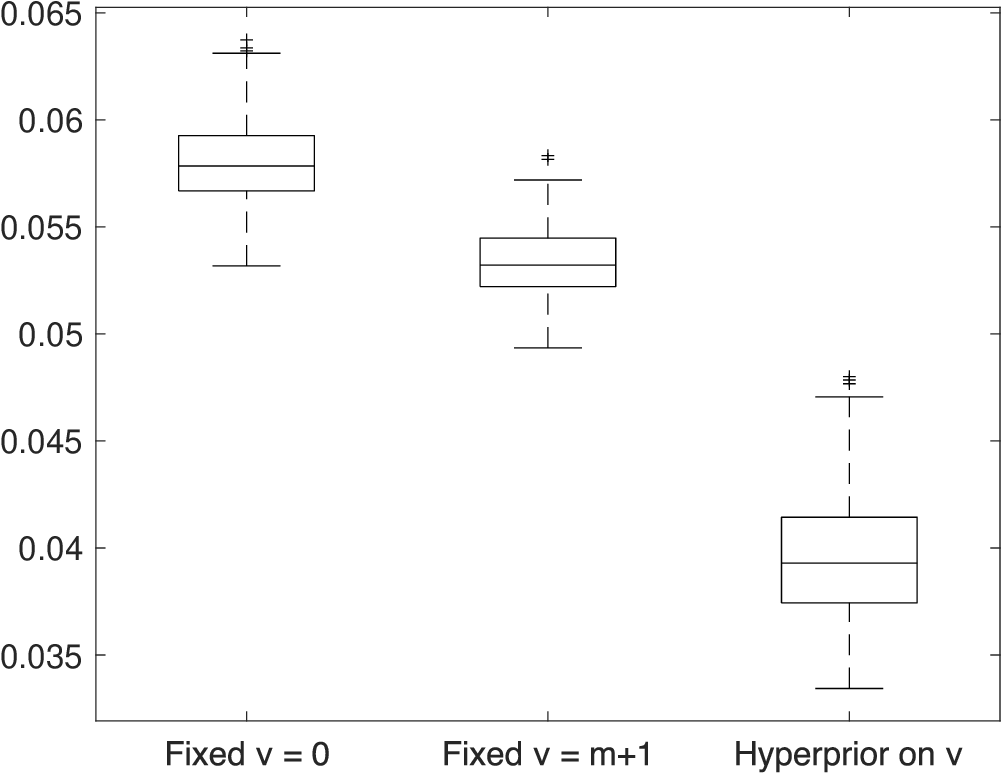}}
	\caption{{Monte Carlo Simulation - RMAD of the covariance matrices. These distributions are obtained by simulating $250$ VAR(1) with dimension $m=10$ and sample size $T=100$. Results are reported for data generated from a Wishart distribution with $\nu = 10$ (left), $\nu=15$ (center), and $\nu = 20$ (right).}}
	\label{Fig_Sim_M10_Supp}
\end{figure}

\begin{table}[h!]
    \centering
    \begin{adjustbox}{width=1\textwidth,center=\textwidth}
    \begin{tabular}{c|c c c| c c c| c c c}
    \hline 
    & \multicolumn{3}{c}{$\nu = 10$} & \multicolumn{3}{c}{$\nu = 15$} & \multicolumn{3}{c}{$\nu = 20$} \\
        Horizon & Fixed & Fixed & Hyperprior & Fixed & Fixed & Hyperprior & Fixed & Fixed & Hyperprior  \\
         & $\nu =0$ & $\nu = m+1$ & & $\nu =0$ & $\nu = m+1$ & & $\nu =0$ & $\nu = m+1$ \\
         \hline
     1 &        0.5400  &  0.9960   & 0.9930   &  0.3460 &   0.9970  &  0.9900 &        0.3090  &  0.9990  &  0.9920 \\
   3 & 0.5880  &  1.0060  &  1.0070  &   0.3420 &   0.9930 &   0.9760 &    0.3420  &  0.9890  &  0.9450 \\
   5 &  0.6310  &  0.9890  &  0.9820  &   0.5240 &   0.9590  &  0.8570 &    0.6860  &  0.9580 &  0.7860 \\
  7 &  0.9960  &  0.9600  &  0.9360 & 1.0960   & 0.9330  &  0.7680   &1.7110  &  0.9380  &  0.6890  \\
         \hline 
    \end{tabular}
    \end{adjustbox}
    \caption{{Monte Carlo Simulation - RMAD of the IRF for four horizons $h=1, 3, 5$ and $7$ by simulating $250$ VAR(1) with dimension $m=10$ and sample size $T=100$. Column Fixed $\nu=0$ provides the RMAD of the IRF, while Columns Fixed $\nu = m+1$ and Hyperprior provide the ratio between the referred priors and the flat prior.}}
    \label{tab:IRF_Sim_M10_Supp}
\end{table}

In conclusion, Figure~\ref{Fig_Sim_M20_Supp} shows the results for the twenty-dimensional case, where the data are generated from a Wishart with $20$ (left), $24$ (center), and $26$ (right) degrees of freedom. Table~\ref{tab:IRF_Sim_M20_Supp} shows the RMAD for the impulse response functions at four horizons.

\begin{figure}[h!]
 	{\includegraphics[width=0.33\linewidth]{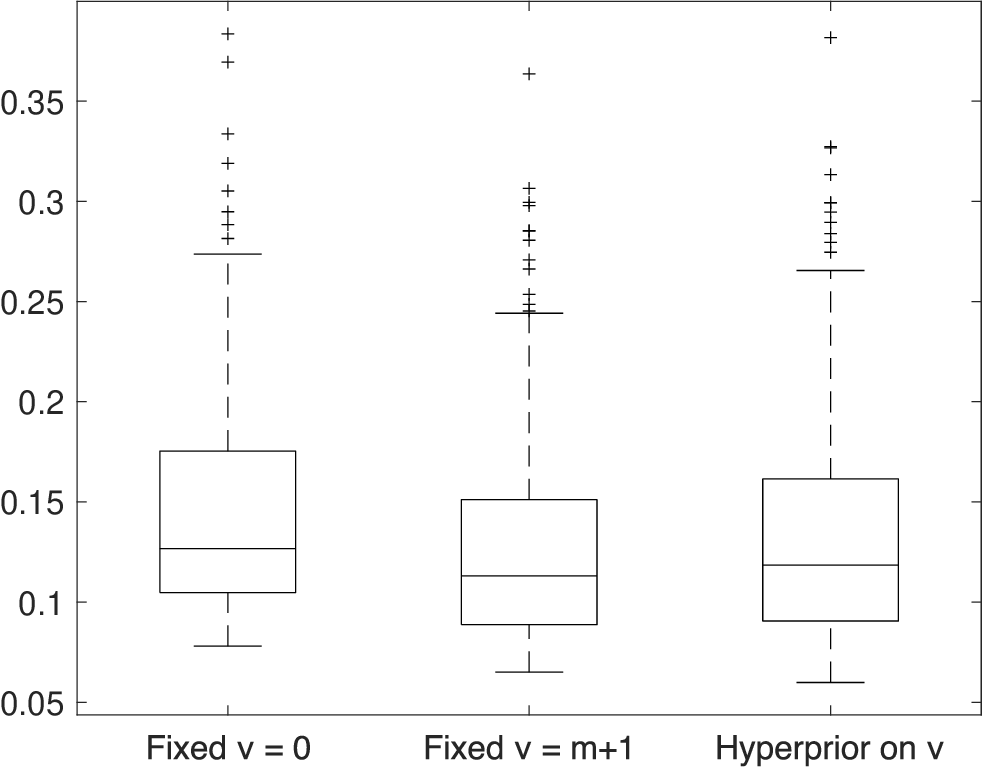}}\hfill%
	{\includegraphics[width=0.33\linewidth]{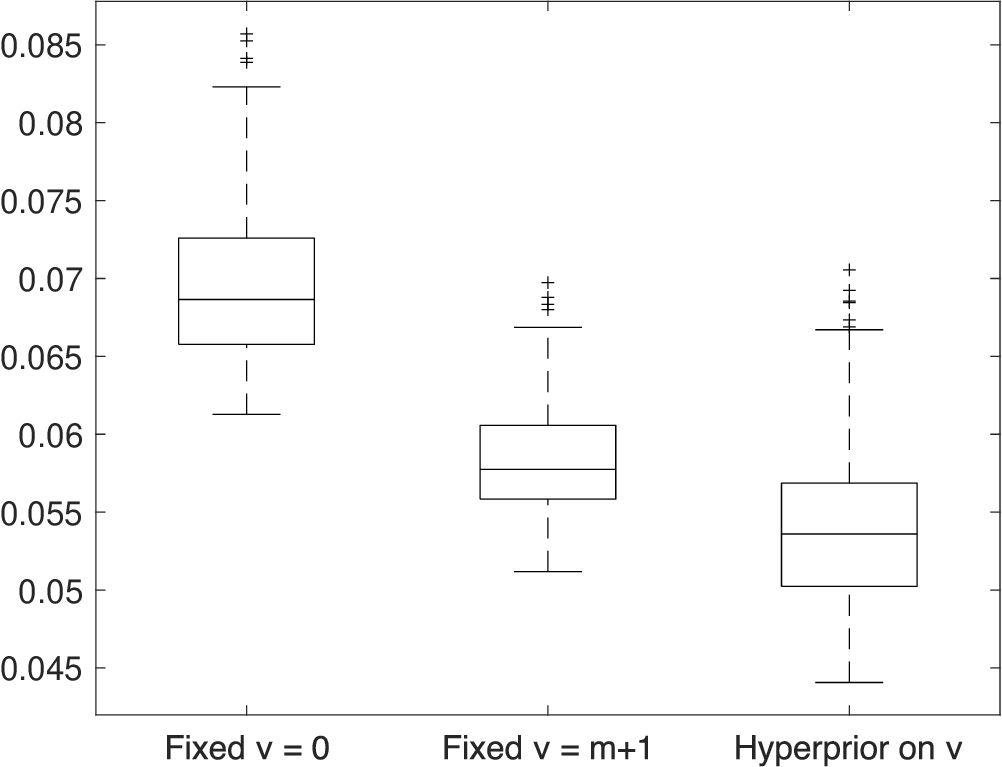}}\hfill%
	{\includegraphics[width=0.33\linewidth]{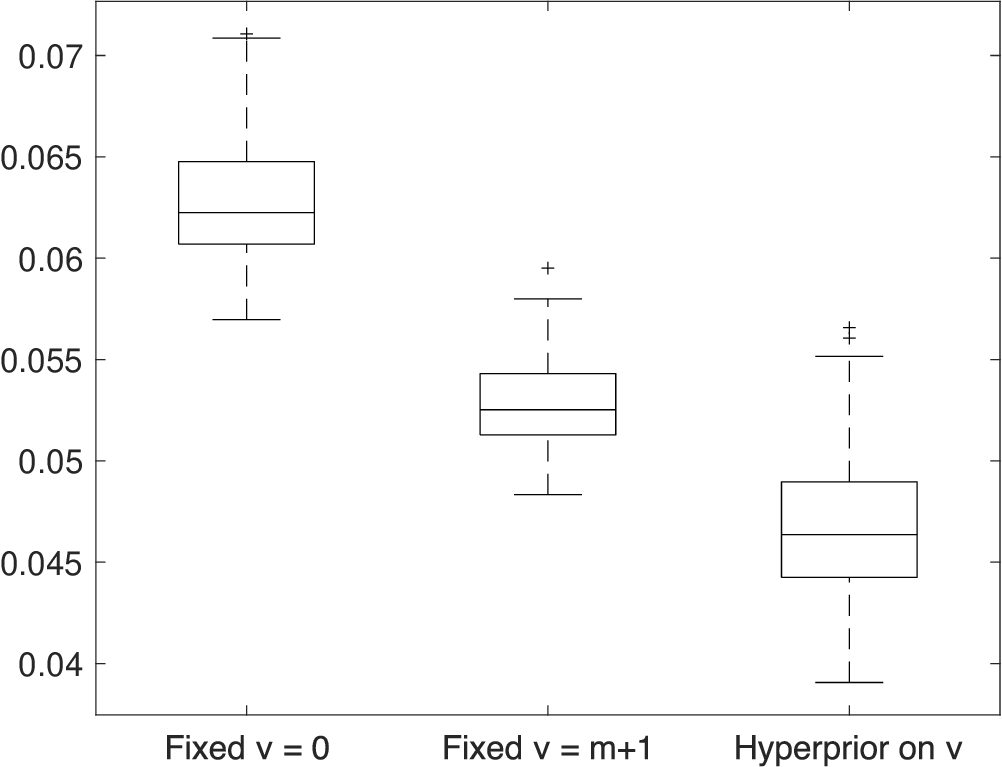}}
	\caption{{Monte Carlo Simulation - RMAD of the covariance matrices. These distributions are obtained by simulating $250$ VAR(1) with dimension $m=20$ and sample size $T=100$. Results are reported for data generated from a Wishart distribution with $\nu = 20$ (left), $\nu=24$ (center), and $\nu = 26$ (right).}}
	\label{Fig_Sim_M20_Supp}
\end{figure}

\begin{table}[h!]
    \centering
    \begin{adjustbox}{width=1\textwidth,center=\textwidth}
    \begin{tabular}{c|c c c| c c c| c c c}
    \hline 
    & \multicolumn{3}{c}{$\nu = 20$} & \multicolumn{3}{c}{$\nu = 24$} & \multicolumn{3}{c}{$\nu = 26$} \\
        Horizon & Fixed & Fixed & Hyperprior & Fixed & Fixed & Hyperprior & Fixed & Fixed & Hyperprior  \\
         & $\nu =0$ & $\nu = m+1$ & & $\nu =0$ & $\nu = m+1$ & & $\nu =0$ & $\nu = m+1$ \\
         \hline
     1 &  0.6200   &  0.9960    & 0.9950 &    0.4010   &  0.9970   &  0.9950 &    0.3680 &   0.9980 &   0.9950 \\
    3 & 0.6440   &  1.0210  &   1.0250 &     0.3910   &  0.9890    & 0.9820 &     0.3750 &   0.9860  &  0.9710 \\
    5 & 1.1930   &  0.9600   &  0.9510 &     1.0610   &  0.9280   &  0.8680 &     1.1370  &  0.9320  &  0.8580 \\
    7 & 3.8090   &  0.9230   &  0.9010 &     3.7890   &  0.8930   &  0.8070 &     4.2470  &  0.8990  &  0.7910 \\
    \hline 
    \end{tabular}
    \end{adjustbox}
    \caption{{Monte Carlo Simulation - RMAD of the IRF for four horizons $h=1, 3, 5$ and $7$ by simulating $250$ VAR(1) with dimension $m=20$ and sample size $T=100$. Column Fixed $\nu=0$ provides the RMAD of the IRF, while Columns Fixed $\nu = m+1$ and Hyperprior provide the ratio between the referred priors and the flat prior.}}  
    \label{tab:IRF_Sim_M20_Supp}
\end{table}

\FloatBarrier

As stated in the paper, the results show improvements in the use of our loss-based prior with respect to a fixed $\nu$ prior when the data are generated with a higher than the dimension degrees of freedom.

\FloatBarrier

\subsection{Case $T=240$}

{As a third simulated experiment, we report the RMAD for the case with $T=240$. Figure~\ref{Fig_Sim_M3_Supp} shows the RMAD for the covariance matrix for the three-dimensional case when the data are generated from a Wishart distribution with degrees of freedom equal to $3$ (left), $5$ (center), and $7$ (right). Table~\ref{tab:IRF_Sim_M3_Supp} shows the RMAD for the impulse response variable at four different horizons $h=1,3,5$, and $7$.}

\begin{figure}[h!]
 	{\includegraphics[width=0.33\linewidth]{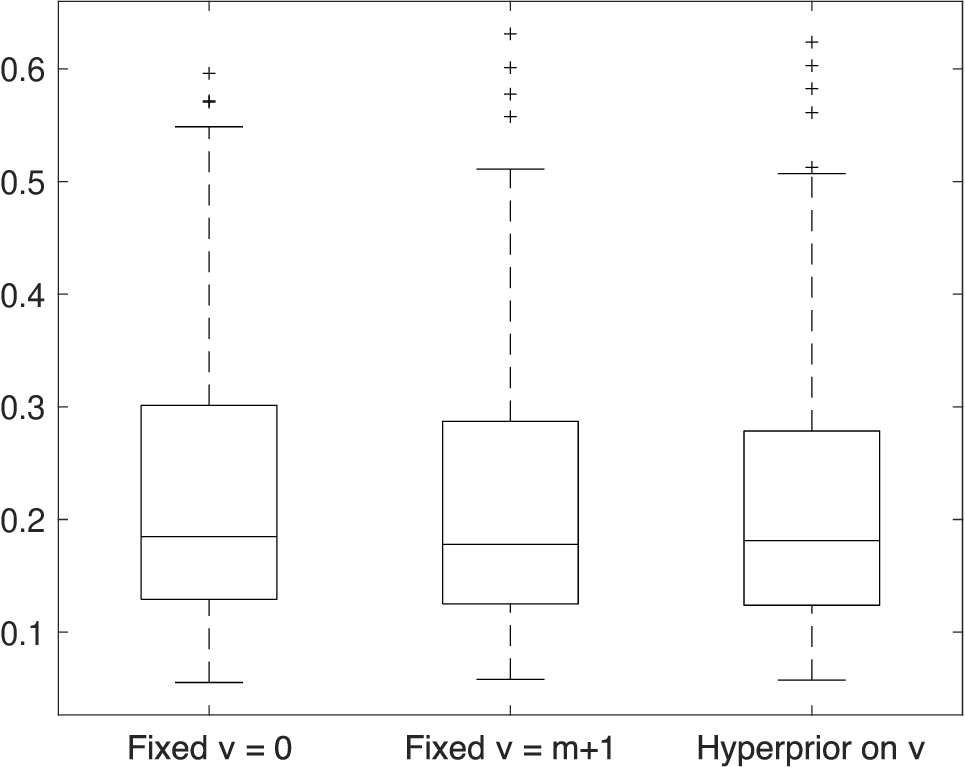}}\hfill%
	{\includegraphics[width=0.33\linewidth]{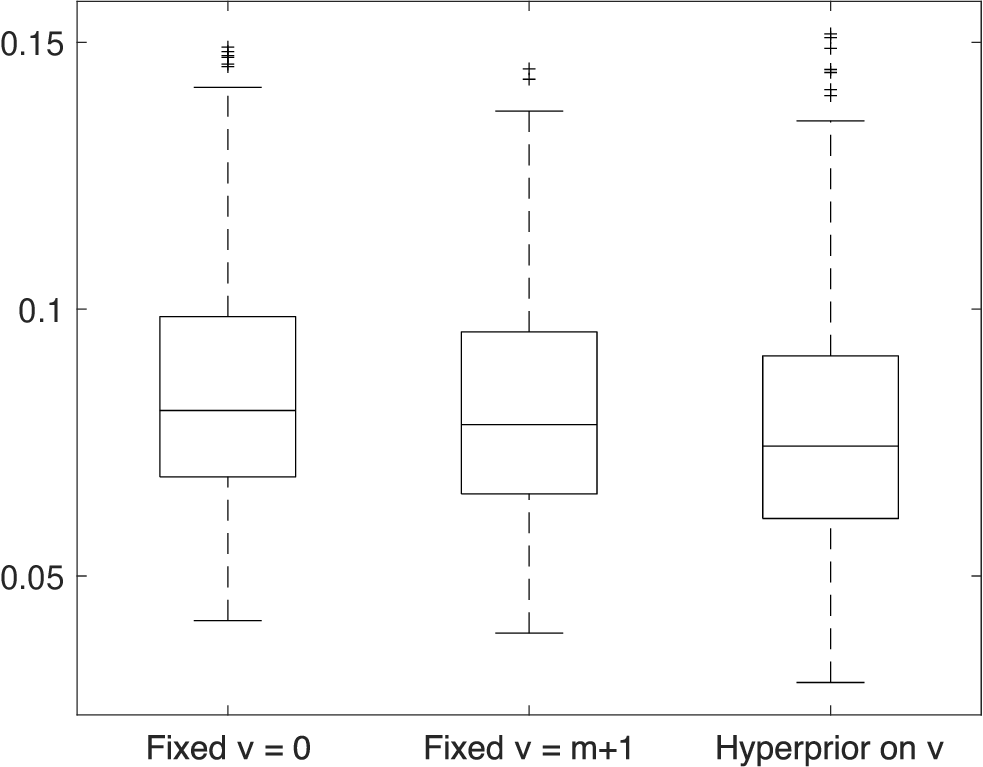}}\hfill%
	{\includegraphics[width=0.33\linewidth]{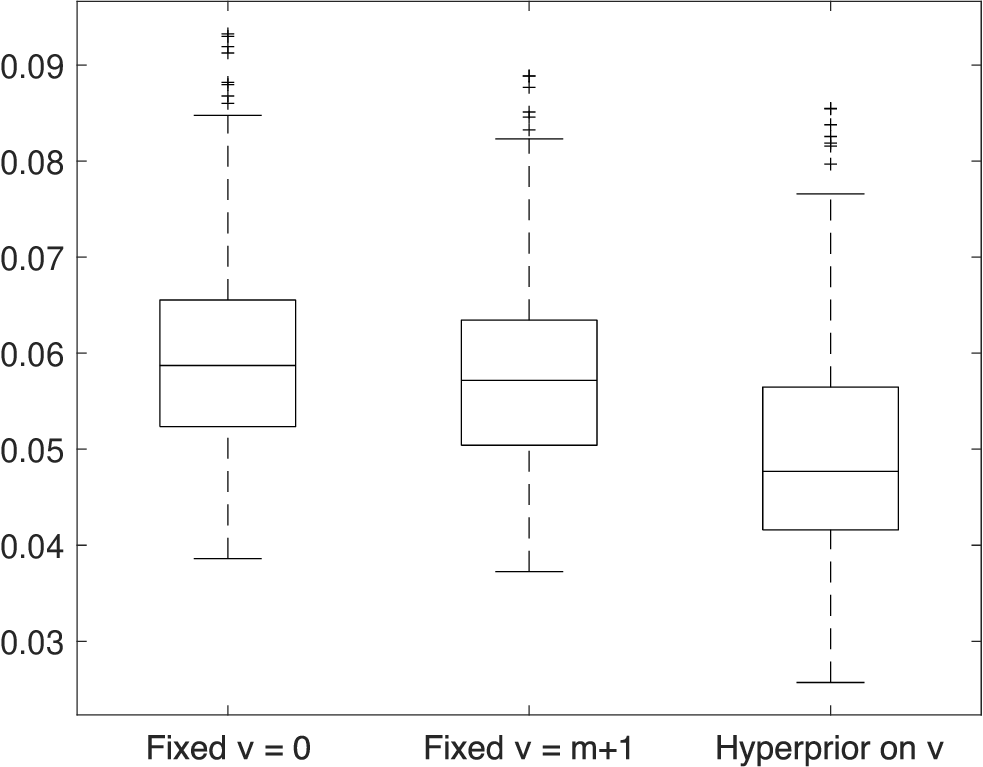}}
	\caption{{Monte Carlo Simulation - RMAD of the covariance matrices. These distributions are obtained by simulating $250$ VAR(1) with dimension $m=3$ and sample size $T=240$. Results are reported for data generated from a Wishart distribution with $\nu = 3$ (left), $\nu=5$ (center), and $\nu = 7$ (right).}}
	\label{Fig_Sim_M3_Supp}
\end{figure}

\begin{table}[h!]
    \centering
    \begin{adjustbox}{width=1\textwidth,center=\textwidth}
    \begin{tabular}{c|c c c| c c c| c c c}
    \hline 
    & \multicolumn{3}{c}{$\nu = 3$} & \multicolumn{3}{c}{$\nu = 5$} & \multicolumn{3}{c}{$\nu = 7$} \\
        Horizon & Fixed & Fixed & Hyperprior & Fixed & Fixed & Hyperprior & Fixed & Fixed & Hyperprior  \\
         & $\nu =0$ & $\nu = m+1$ & & $\nu =0$ & $\nu = m+1$ & & $\nu =0$ & $\nu = m+1$ \\
         \hline
     1 &    0.2150   &   0.9990   &   0.9990 &    0.1750  &  0.9990 &   0.9970 &   0.1750   & 0.9990  &  0.9960 \\
     3 & 0.2990   &   1.0000   &   1.0010 &      0.2280   & 1.0000 &   1.0000   &  0.2080  &  1.0000  &  0.9970 \\
     5 & 0.2400   &   1.0000   &   1.0010 &      0.1930 &   0.9990 &   0.9980  &   0.1850 &   0.9990  &  0.9920 \\
    7  & 0.2020   &   0.9990  &    1.0000 &     0.1740  &  0.9980   & 0.9950  &   0.1710   & 0.9980   & 0.9820 \\
    \hline 
    \end{tabular}
    \end{adjustbox}
    \caption{{Monte Carlo Simulation - RMAD of the IRF for four horizons $h=1, 3, 5$ and $7$ by simulating $250$ VAR(1) with dimension $m=3$ and sample size $T=240$. Column Fixed $\nu=0$ provides the RMAD of the IRF, while Columns Fixed $\nu = m+1$ and Hyperprior provide the ratio between the referred priors and the flat prior.}}
    \label{tab:IRF_Sim_M3_Supp}
\end{table}

\FloatBarrier

{In Figure~\ref{Fig_Sim_M7_Supp}, we study the seven-dimensional case for the matrix of covariances and data generated with $7$ (left), $10$ (center), and $13$ (right) degrees of freedom. The RMAD of the impulse response functions for four different horizons are reported in Table~\ref{tab:IRF_Sim_M7_Supp}.}

\begin{figure}[h!]
	{\includegraphics[width=0.33\linewidth]{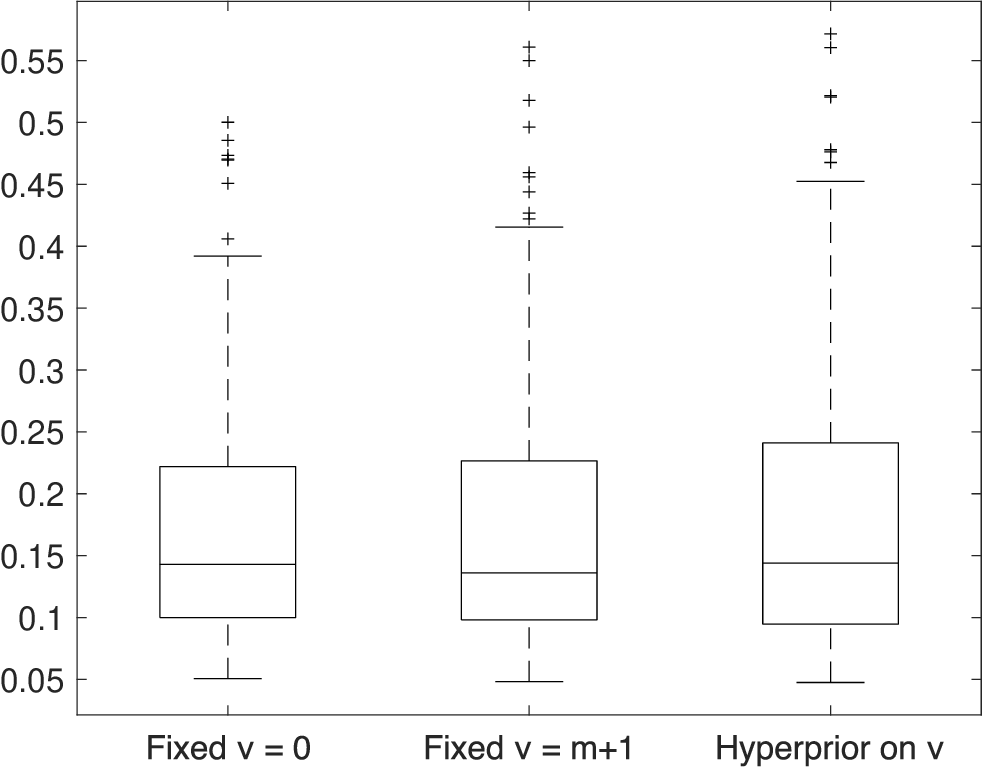}}\hfill%
	{\includegraphics[width=0.33\linewidth]{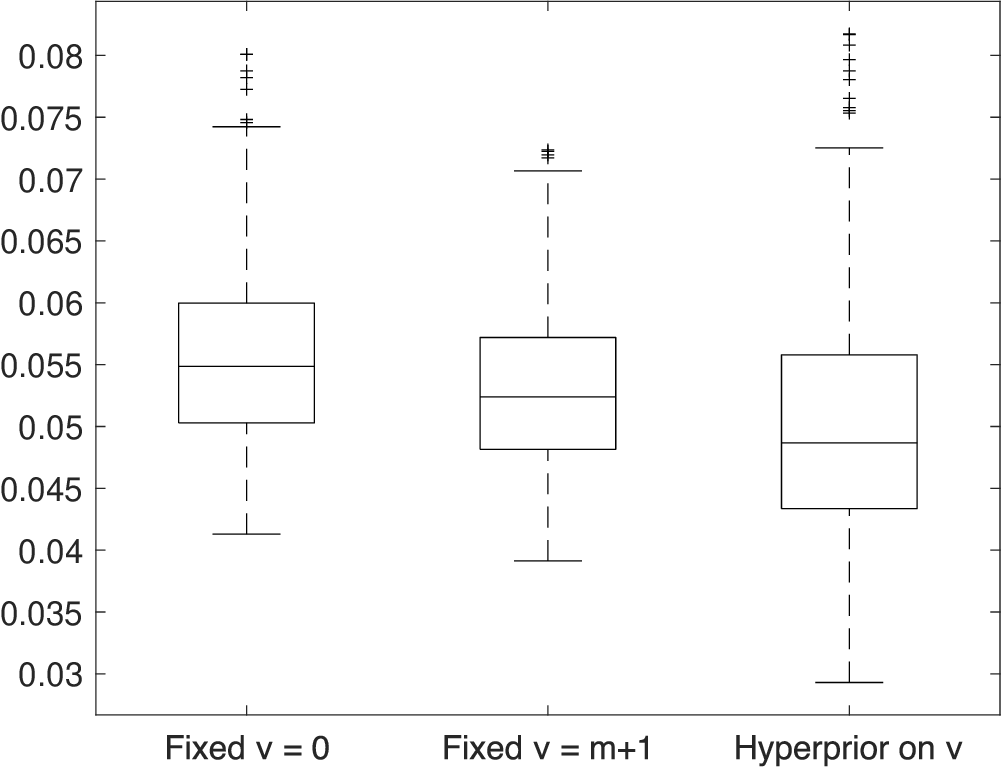}}\hfill%
	{\includegraphics[width=0.33\linewidth]{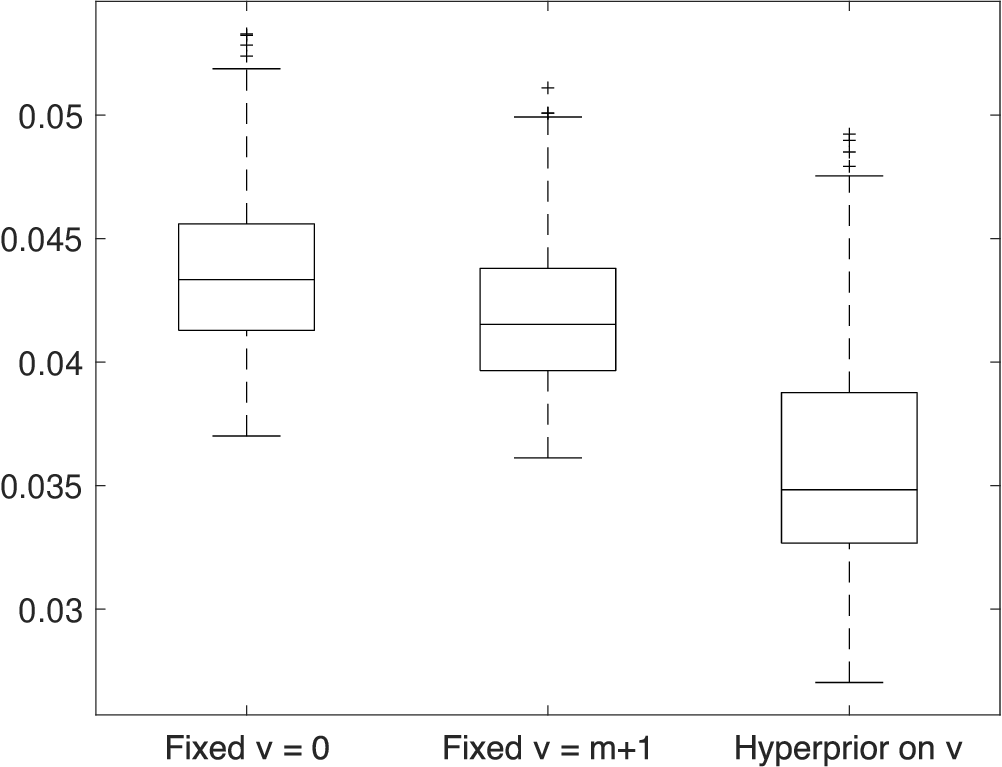}}	
 \caption{{Monte Carlo Simulation - RMAD of the covariance matrices. These distributions are obtained by simulating $250$ VAR(1) with dimension $m=7$ and sample size $T=240$. Results are reported for data generated from a Wishart distribution with $\nu = 7$ (left), $\nu=10$ (center), and $\nu = 13$ (right).}}
	\label{Fig_Sim_M7_Supp}
\end{figure}

\begin{table}[h!]
    \centering
    \begin{adjustbox}{width=1\textwidth,center=\textwidth}
    \begin{tabular}{c|c c c| c c c| c c c}
    \hline 
    & \multicolumn{3}{c}{$\nu = 7$} & \multicolumn{3}{c}{$\nu = 10$} & \multicolumn{3}{c}{$\nu = 13$} \\
        Horizon & Fixed & Fixed & Hyperprior & Fixed & Fixed & Hyperprior & Fixed & Fixed & Hyperprior  \\
         & $\nu =0$ & $\nu = m+1$ & & $\nu =0$ & $\nu = m+1$ & & $\nu =0$ & $\nu = m+1$ \\
         \hline
     1 &    0.4740 &   0.9980  &  0.9960&    0.2730   &   0.9980   &   0.9900 &    0.2620 &   0.9990 &   0.9880 \\
     3 &     0.5820 &   1.0010 &   1.0000&    0.2750  &    0.9990   &   0.9970 &     0.2520 &   0.9990  &  0.9910 \\
     5 &     0.5160  &  0.9990 &   0.9970&    0.2450  &    0.9970   &   0.9870 &     0.2380  &  0.9950  &  0.9660 \\
     7 &    0.5210  &  0.9950 &   0.9950&     0.2420   &   0.9930   &   0.9660 &     0.2570  &  0.9890  &  0.9240 \\
    \hline 
    \end{tabular}
    \end{adjustbox}
    \caption{{Monte Carlo Simulation - RMAD of the IRF for four horizons $h=1, 3, 5$ and $7$ by simulating $250$ VAR(1) with dimension $m=7$ and sample size $T=240$. Column Fixed $\nu=0$ provides the RMAD of the IRF, while Columns Fixed $\nu = m+1$ and Hyperprior provide the ratio between the referred priors and the flat prior.}}
    \label{tab:IRF_Sim_M7_Supp}
\end{table}

\FloatBarrier

{In conclusion, Figure~\ref{Fig_Sim_M15_Supp} shows the results for the fifteen-dimensional case, where the data are generated from a Wishart with $15$ (left); $20$ (center), and $25$ (right) degrees of freedom. Table~\ref{tab:IRF_Sim_M15_Supp} shows the RMAD for the impulse response function for four different horizons.}

\begin{figure}[h!]
	{\includegraphics[width=0.33\linewidth]{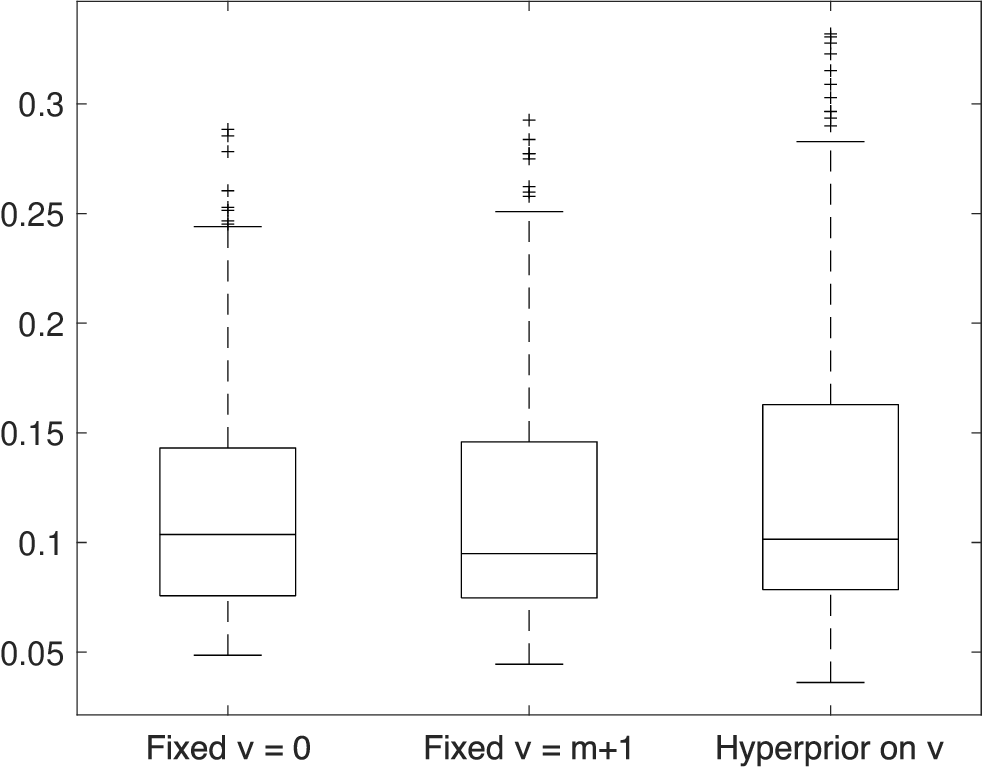}}\hfill%
	{\includegraphics[width=0.33\linewidth]{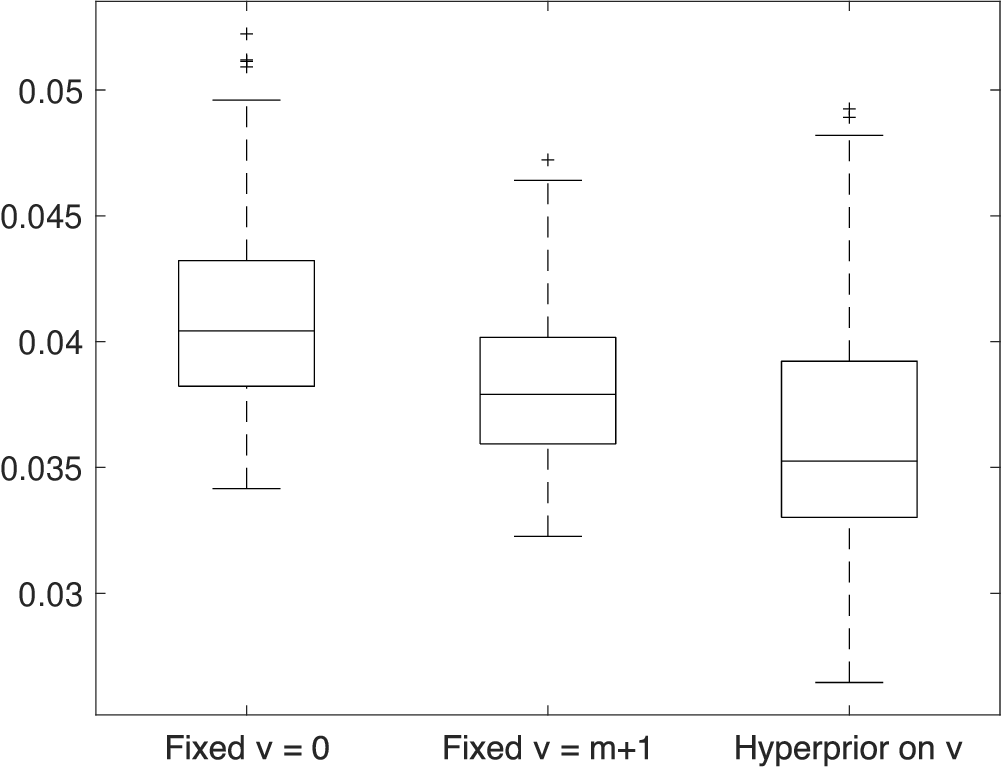}}\hfill%
	{\includegraphics[width=0.33\linewidth]{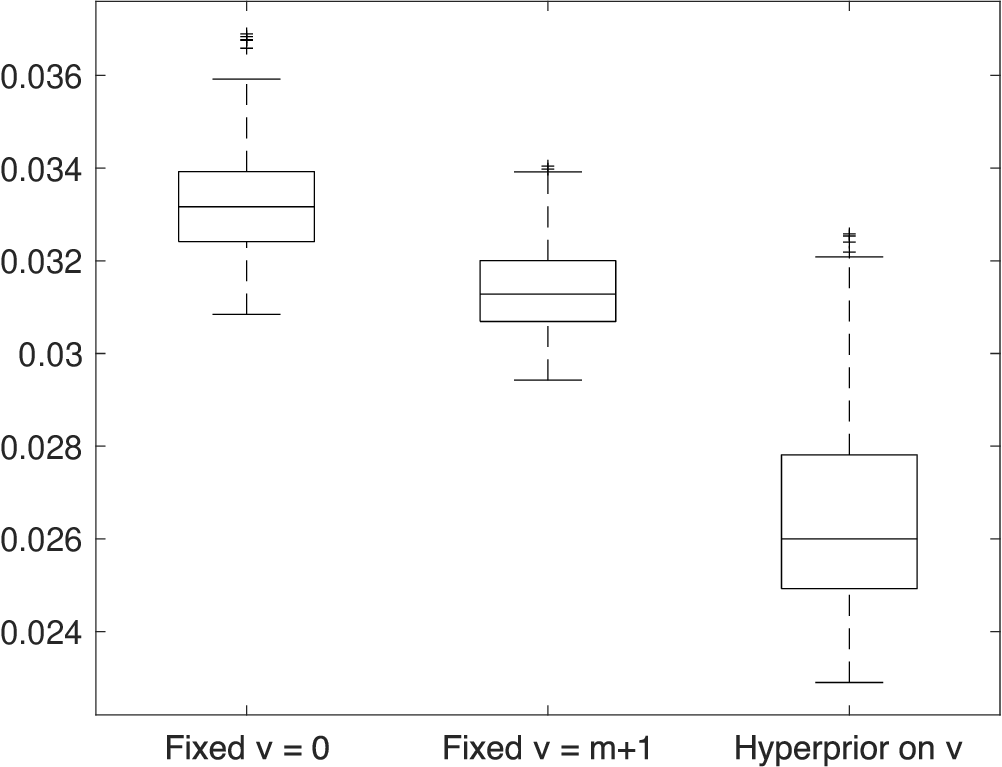}}
 \caption{{Monte Carlo Simulation - RMAD of the covariance matrices. These distributions are obtained by simulating $250$ VAR(1) with dimension $m=15$ and sample size $T=240$. Results are reported for data generated from a Wishart distribution with $\nu = 15$ (left), $\nu=20$ (center), and $\nu = 25$ (right).}}
	\label{Fig_Sim_M15_Supp}
\end{figure}

\begin{table}[h!]
    \centering
    \begin{adjustbox}{width=1\textwidth,center=\textwidth}
    \begin{tabular}{c|c c c| c c c| c c c}
    \hline 
    & \multicolumn{3}{c}{$\nu = 15$} & \multicolumn{3}{c}{$\nu = 20$} & \multicolumn{3}{c}{$\nu = 25$} \\
        Horizon & Fixed & Fixed & Hyperprior & Fixed & Fixed & Hyperprior & Fixed & Fixed & Hyperprior  \\
         & $\nu =0$ & $\nu = m+1$ & & $\nu =0$ & $\nu = m+1$ & & $\nu =0$ & $\nu = m+1$ \\
         \hline
     1 &      0.5670   &   0.9970   &   0.9940 &     0.3370   & 0.9970  &  0.9880&    0.3020   & 0.9980  &  0.9880 \\
     3 &     0.6660   &   1.0040   &   1.0070 &     0.3100   & 0.9990  &  0.9930&     0.2850  &  0.9950   & 0.9730 \\
     5 &     0.6910   &   0.9940   &   0.9930 &     0.3540   & 0.9810  &  0.9170&     0.4180  &  0.9720   & 0.8350 \\
     7 &    1.0690   &   0.9750   &   0.9600 &     0.5880   & 0.9580  &  0.8200&     0.8550  &  0.9520   & 0.7270 \\
    \hline 
    \end{tabular}
    \end{adjustbox}
    \caption{{Monte Carlo Simulation - RMAD of the IRF for four horizons $h=1, 3, 5$ and $7$ by simulating $250$ VAR(1) with dimension $m=15$ and sample size $T=240$. Column Fixed $\nu=0$ provides the RMAD of the IRF, while Columns Fixed $\nu = m+1$ and Hyperprior provide the ratio between the referred priors and the flat prior.}}
    \label{tab:IRF_Sim_M15_Supp}
\end{table}

\section{Convergence Diagnostics}
{In this section of the appendix, we describe the converge analysis we have performed for the different simulation experiments.

The convergence analysis has been done using the R Coda Package \citep{Coda06}. In particular, we provide the chain of the degrees of freedom, the Geweke convergence test, the Gelman–Rubin’s plot, and test statistics. Figure~\ref{fig:Coda_Analysis} provides the chain of the degrees of freedom when $T=30$ and $m$ is equal to $5$ (left), $10$ (center), and $20$ (right). As shown in the Figure, we are able to reach convergence of the chain over the number of iterations.
Table~\ref{tab:Geweke_Diagn} reports the results of Geweke’s convergence test \citep{Geweke:1992gm} for the different simulation experiments. The test statistics
show no convergence issues. To perform the Gelman–Rubin test of convergence, we have run multiple chains with sparse starting points. Figure~\ref{fig:Gelman_Diagn} plots the shrinking factor for different simulation experiments when the burn-in iterations are not discarded and in all cases, we do not see any indication of failed convergence under the proposed loss-based prior.
}

\begin{figure}[h!]
	{\includegraphics[width=0.33\linewidth]{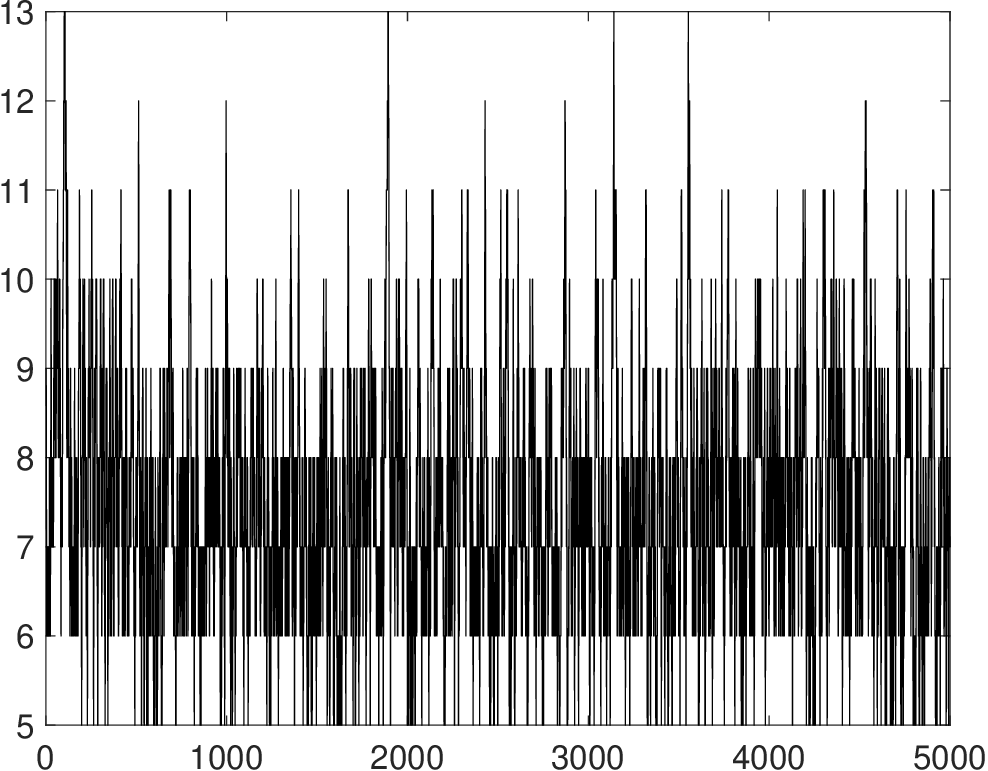}}\hfill%
	{\includegraphics[width=0.33\linewidth]{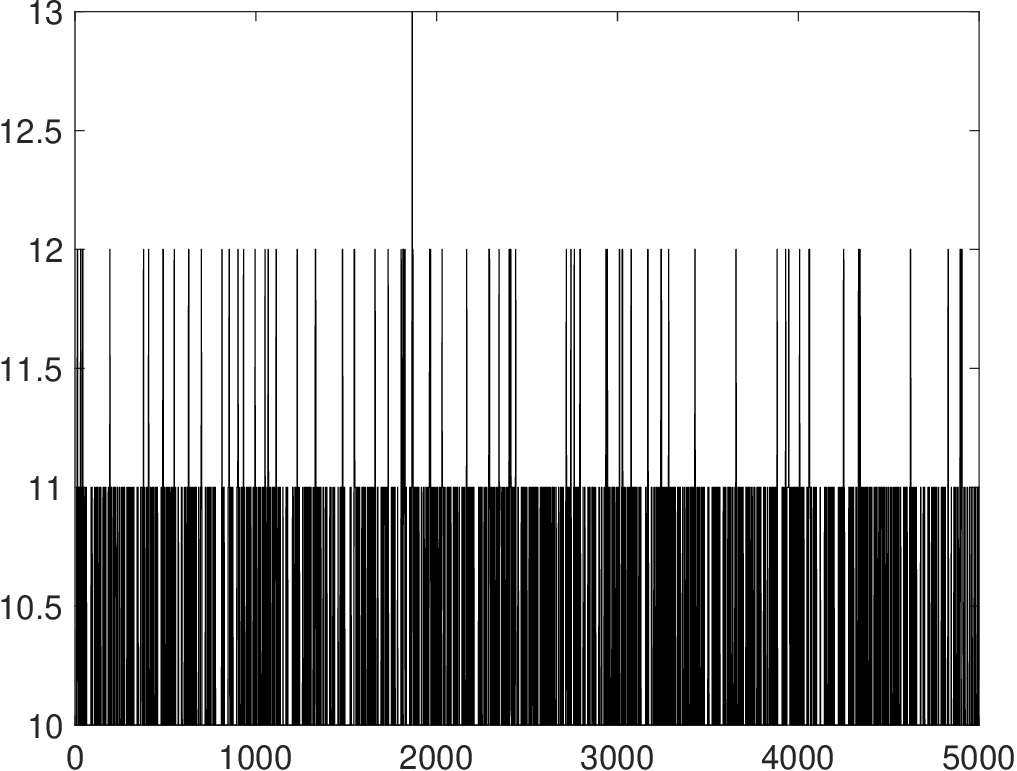}}\hfill%
	{\includegraphics[width=0.33\linewidth]{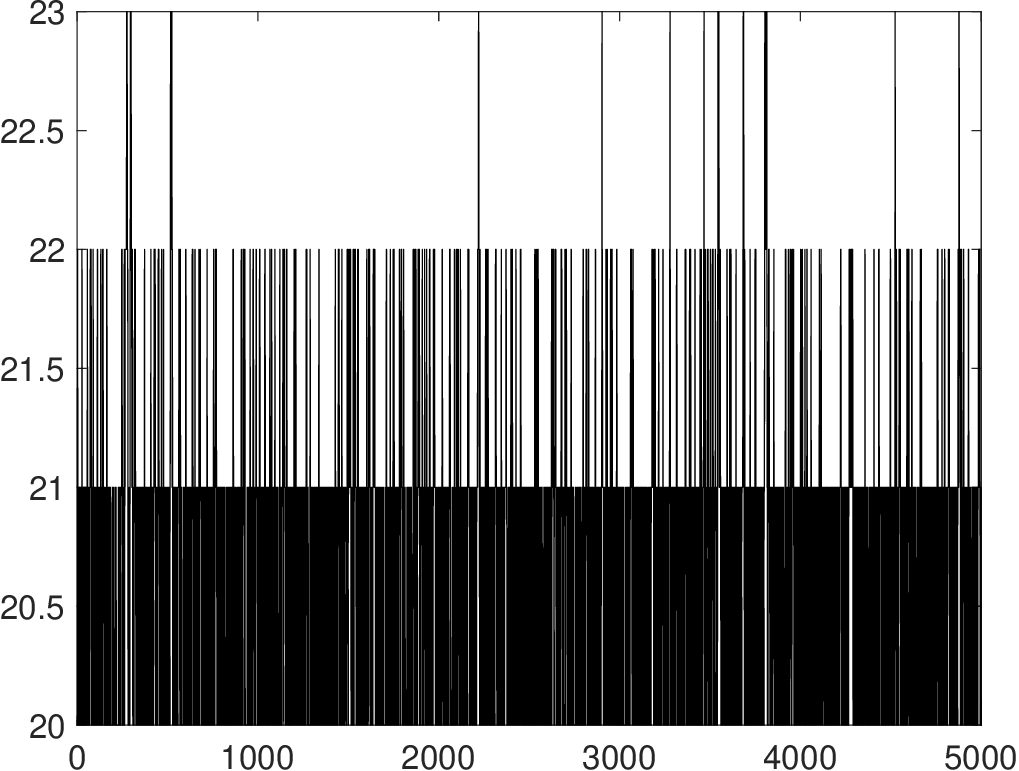}}
 \caption{Posterior Chain of the estimated degrees of freedom when $T=30$ and $m$ is equal to $5$ (left); $10$ (center), and $20$ (right).}
	\label{fig:Coda_Analysis}
\end{figure}

\begin{table}[h!]
    \centering
    \begin{tabular}{cc}
    \hline
    Case: Simulated data & Test \\
    \hline
    $m = 5$ & 0.6713 \\ 
    $m = 10$ & 0.9306 \\ 
    $m = 20$ &  -0.8532  \\ 
    \hline
    \end{tabular}
    \caption{Geweke's test statistics for the posterior chain of the degrees of freedom for different simulation experiments when $T=30$.}
    \label{tab:Geweke_Diagn}
\end{table}

\begin{figure}[h!]
	{\includegraphics[width=0.33\linewidth]{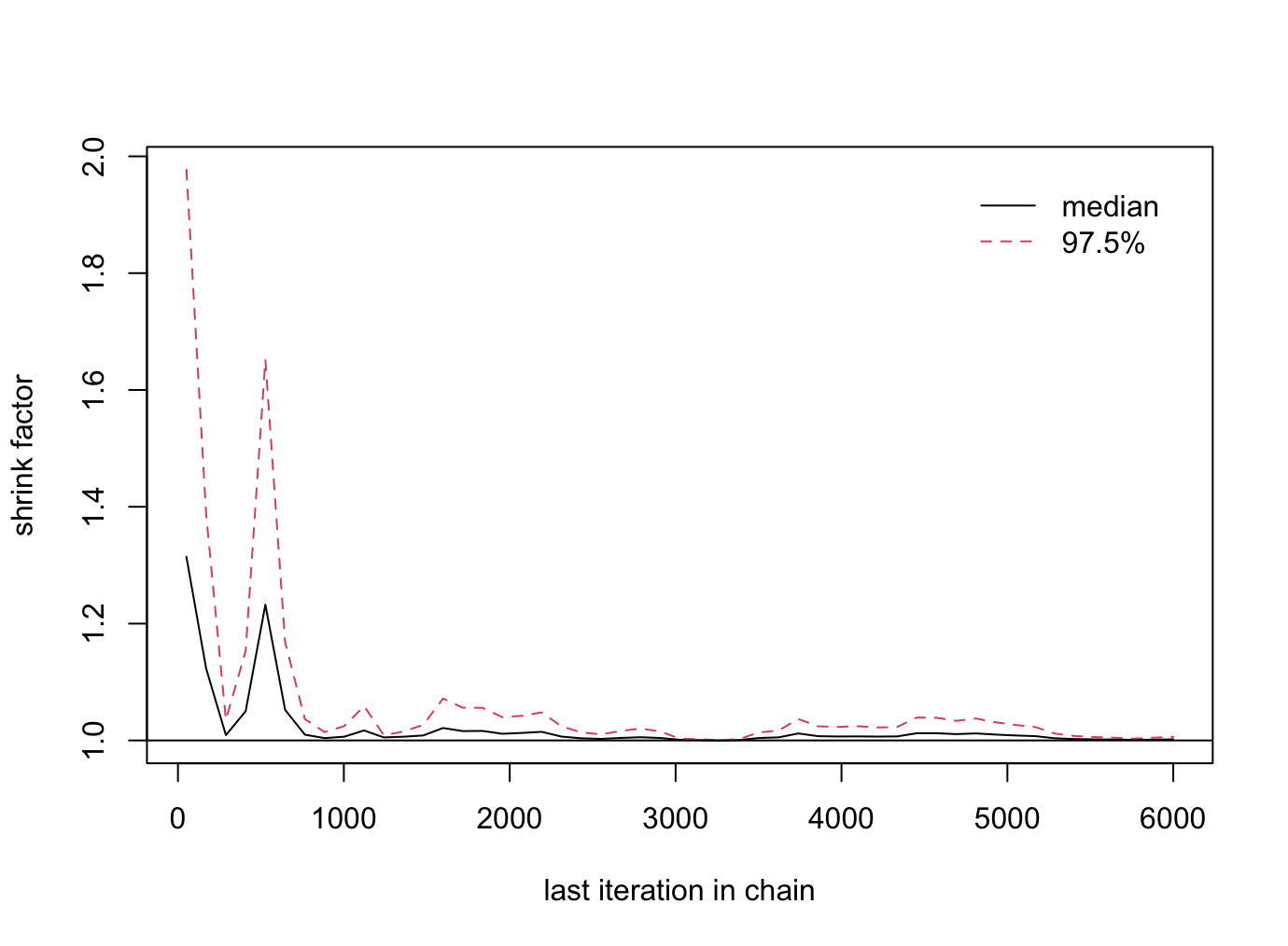}}\hfill%
	{\includegraphics[width=0.33\linewidth]{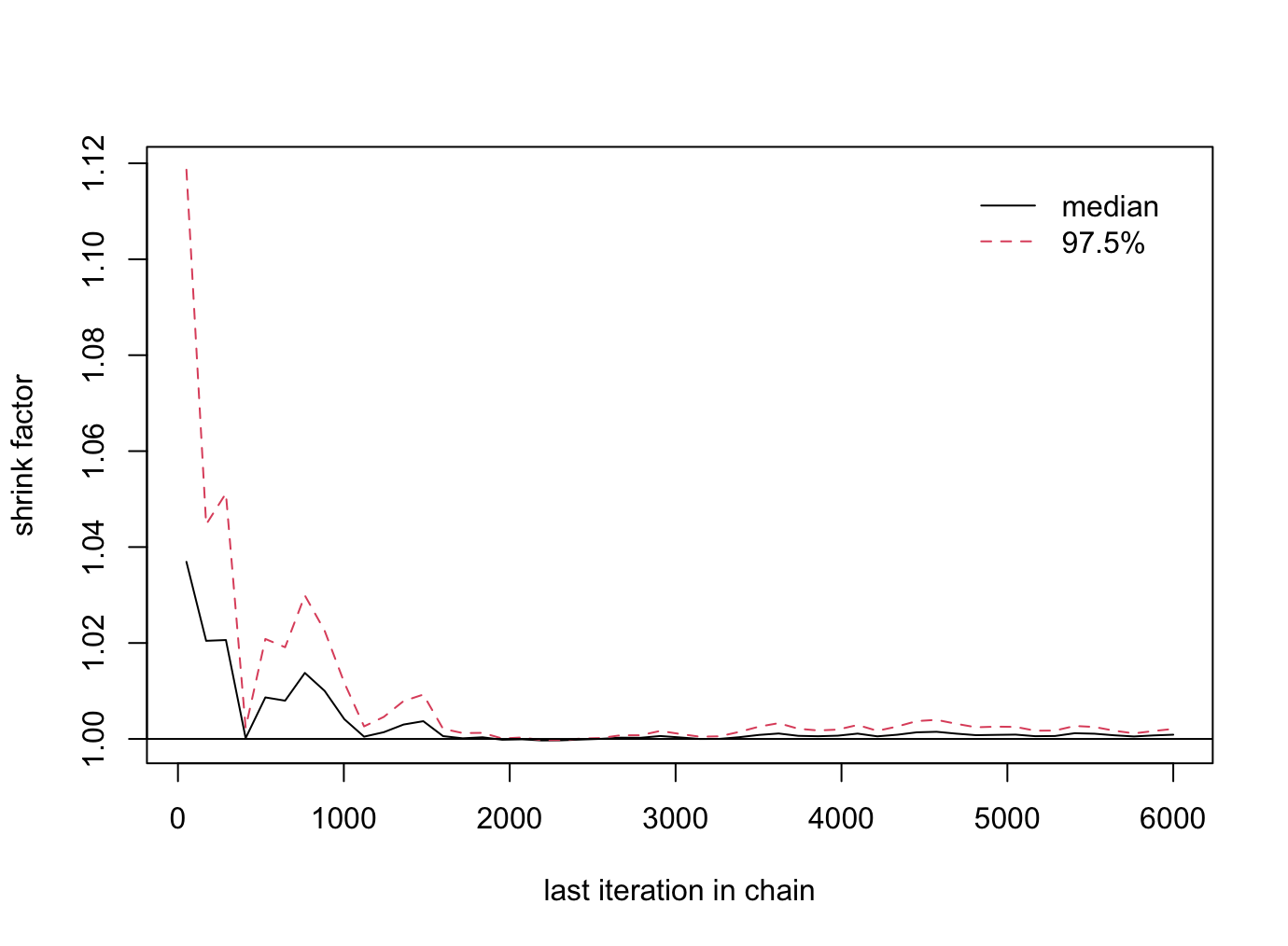}}\hfill%
	{\includegraphics[width=0.33\linewidth]{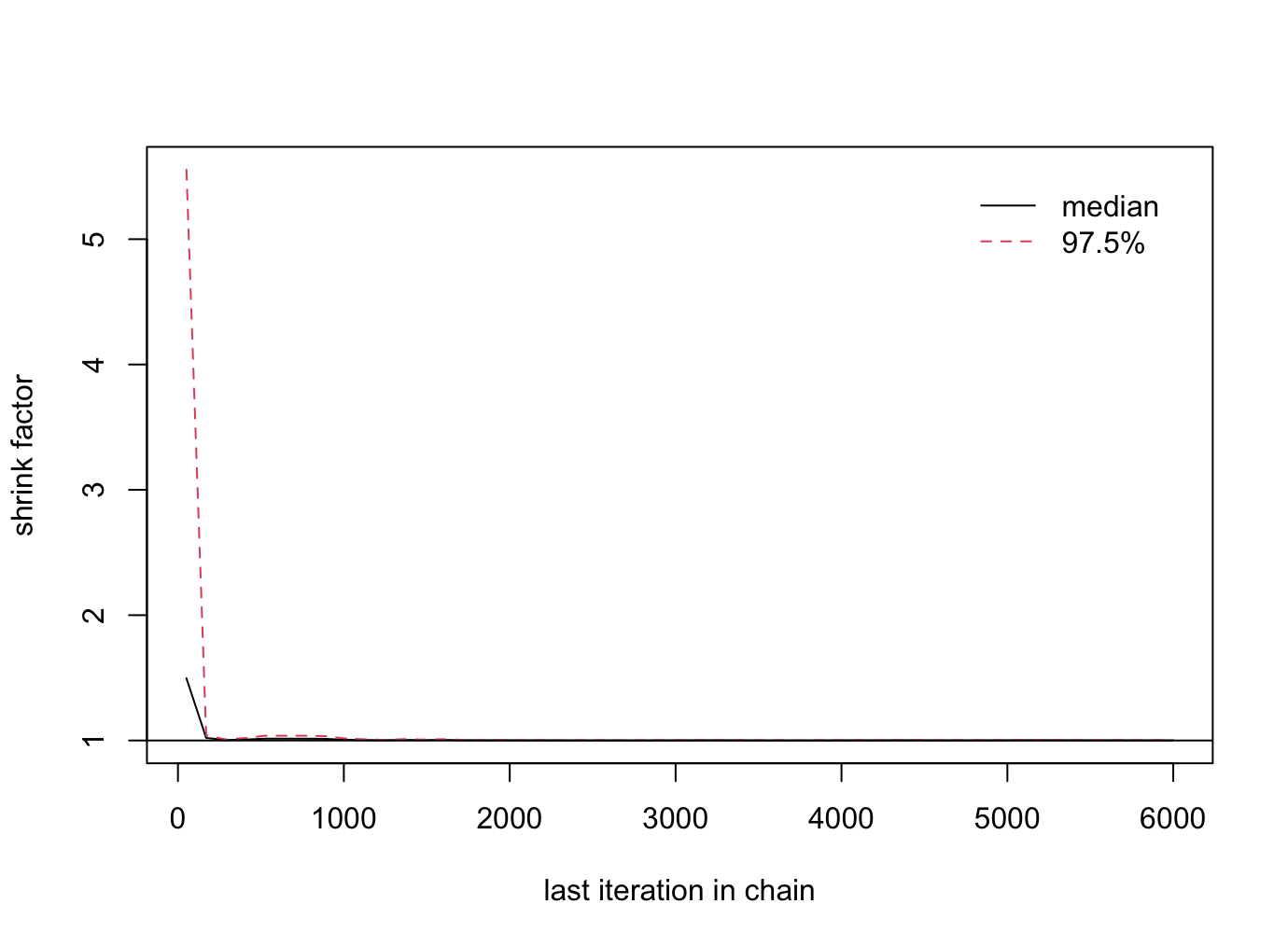}}
 \caption{Gelman–Rubin plot for the estimated degrees of freedom when $T=30$ and $m$ is equal to $5$ (left); $10$ (center), and $20$ (right) when no burn-in iterations are discarded.}
	\label{fig:Gelman_Diagn}
\end{figure}

\end{document}